\documentclass[conference]{IEEEtran}
\usepackage[table]{xcolor}

\usepackage{graphicx}

\ifCLASSINFOpdf
\else
\fi
\usepackage[table]{xcolor}
\usepackage{graphicx}
\usepackage{amsmath}
\usepackage{amssymb} 
\usepackage{bbm}
\usepackage{tabularx}
\usepackage{booktabs}
\usepackage{graphicx} 
\usepackage{hyperref}
\usepackage{xcolor}

\usepackage{xurl}
\newcommand{\raw}[1]{\nolinkurl{#1}}

\usepackage{amsthm}
\newtheorem{definition}{Definition}

\newenvironment{ltlformula}{\small\begin{equation*}}{\end{equation*}\ignorespacesafterend}
\usepackage{amsthm}
\newtheorem{theorem}{Theorem}

\begin{document}
%
\title{SpecGuard: Detecting Multi-Turn LLM Jailbreaks via Specification-Based Runtime Monitoring}

\author{
  \IEEEauthorblockN{Junle Li,
  Siqi Chen,
  Jiakai Li,
  Bingzhuo Zhong\textsuperscript{*}}
  \IEEEauthorblockA{\textit{Hong Kong University of Science and Technology (Guangzhou)}, Guangzhou, China\\
  Email: \{juliejunleli, bingzhuoz\}@hkust-gz.edu.cn, \{schen972, jli196\}@connect.hkust-gz.edu.cn}
}
	

%


\IEEEoverridecommandlockouts
\makeatletter\def\@IEEEpubidpullup{6.5\baselineskip}\makeatother
\IEEEpubid{\parbox{\columnwidth}{
		Network and Distributed System Security (NDSS) Symposium 2027\\
		22--26 March 2027, Seoul, Republic of Korea\\
		ISBN 978-1-970672-09-1\\  
		https://dx.doi.org/10.14722/ndss.2027.[23$|$24]xxxx\\
		www.ndss-symposium.org
}
\hspace{\columnsep}\makebox[\columnwidth]{}}

\maketitle

\begin{abstract}
Multi-turn jailbreaks turn dialogue safety into a trace property: a violation may become identifiable only when user intent and assistant behavior are considered across multiple turns. Existing guard models largely classify individual messages or exchanges and return risk scores or labels, without executable safety semantics or a formally checkable decision trail. We present SpecGuard, a training-free, specification-driven framework for monitoring multi-turn LLM interactions. SpecGuard uses a fixed atomic-proposition contract to map natural-language observations into causally scoped, evidence-backed facts, and expresses forbidden interaction patterns as Linear Temporal Logic specifications. Each specification is compiled into a symbolic finite automaton that incrementally consumes the resulting dialogue trace. The same monitoring mechanism supports buffered runtime checking and retrospective auditing, while accepting runs produce replayable witnesses that identify the triggered specification, supporting facts, and relevant dialogue turns. Experiments on multi-turn safety benchmarks show that SpecGuard consistently outperforms representative open-source guard models while maintaining controlled false-positive rates. These results demonstrate a practical path from heuristic risk classification toward formally specified, verifiable, and auditable LLM safety monitoring, without learned detector parameters or access to the protected model.
\end{abstract}


%
\IEEEpeerreviewmaketitle

 \section{Introduction}
\label{sec:Intro}

Large language models (LLMs) are increasingly deployed as interactive assistants in question answering, software development, agentic workflows, and human--AI collaboration. Despite substantial advances from alignment training and refusal mechanisms, attackers can still craft inputs that induce models to bypass their safety constraints and produce harmful content. Such attacks are commonly referred to as jailbreaks~\cite{wei2023jailbroken}. As LLMs increasingly serve real users, reliable and auditable monitoring of their interactions has become an important systems security problem.

This challenge is particularly acute in multi-turn interactions. Unlike single-turn attacks that issue a harmful request directly, attacks such as Crescendo~\cite{russinovich2024crescendo} and X-Teaming~\cite{rahman2025xteaming} progressively establish an objective, adapt the attack strategy, and elicit assistance over a sequence of turns. When a turn is examined in isolation, neither the user request nor the model reply may provide sufficient evidence of a violation. The full safety semantics emerge only when the previously established objective, subsequent continuation, and substantive assistance from the model are considered together. Multi-turn jailbreak detection is therefore not merely a text-classification problem; it requires reasoning about behavioral relations in the interaction trace.

Existing defenses include training-time safety alignment, external guard
models~\cite{inan2023llamaguard,han2024wildguard}, and inference-time
interventions~\cite{xu2024safedecoding,CircuitBreaker}. However, most guard
models classify individual prompts, responses, or turns and aggregate these
local decisions at the conversation level. This approach poorly captures
temporal behaviors such as objective continuation, gradual escalation, and
refusal bypass. Moreover, neural classifiers typically output only a risk
label or score, without identifying the satisfied safety conditions, the
turns responsible for an alarm, or a reproducible audit trail.

Formal methods offer an alternative perspective. In safety-critical systems such as autonomous driving, robotics, and task planning, safety requirements are commonly expressed as executable temporal-logic specifications, and system traces are checked against those requirements through model checking or runtime verification~\cite{Mehdipour2023FormalMethodsDriving,Alshiekh2018Shielding}. The value of this paradigm lies not only in producing a final verdict, but also in making safety conditions explicit objects that can be checked, reproduced, and maintained. This motivates our central question:

\emph{Can multi-turn jailbreak detection be formulated as specification monitoring over a dialogue trace, rather than as another black-box classifier over the entire conversation?}

Applying this paradigm to natural language is nontrivial. The challenge is not merely to choose a temporal logic, but first to determine what the formal trace should represent. Program and physical states are typically described by predefined variables, whereas harmful objectives, execution requests, and conversational strategies must be abstracted from open-ended language. The semantic granularity of atomic propositions (APs) directly affects verification. If an AP already summarizes the final safety judgment, the specification merely reproduces an upstream classifier's conclusion and cannot express the concrete linguistic facts that led to it or support interpretable cross-turn composition. Conversely, if APs record only scattered surface cues, their meaning becomes unstable under paraphrase and contextual variation, making it difficult to distinguish substantive harmful assistance from benign discussion or an effective refusal. APs should therefore represent process-level facts that can be determined from a causal prefix, checked independently, and linked to their source evidence.

AP design is coupled with the strength of the specification constraints. Overly permissive specifications may treat the local co-occurrence of a high-risk topic and an execution-oriented expression as a violation, resulting in excessive false positives. Overly restrictive specifications may cover only a small set of fixed phrasings and miss realistic variants involving paraphrase, gradual escalation, or contextual continuation. Effective safety verification must balance coverage and discrimination by combining complementary evidence, relevant temporal relations, and necessary assistant-side behavior. The central design problem is thus not to define APs or temporal formulas independently, but to jointly determine the semantic boundaries of process-level facts, their causal activation conditions, and the scope of the specifications that constrain them.

To handle the difficulties above, we present SpecGuard, a specification-driven safety monitoring framework for multi-turn LLM interactions. Its core is an auditable AP contract that excludes result-shaped facts. We derive safety-relevant observations from security policies, known jailbreak behaviors, and development dialogues, and organize them into 15 canonical APs with explicit speaker roles, causal inputs, and semantic boundaries. These APs describe independently checkable process-level facts, such as whether the user establishes a harmful objective or issues an execution request, whether the assistant provides substantive content or procedural detail, and whether the response constitutes an effective refusal. Each activated AP retains its raw evidence provenance, making the conversion from natural language to symbolic traces inspectable and replayable.

On top of this contract, SpecGuard encodes forbidden behaviors as ten Linear Temporal Logic (LTL) specifications: four same-turn specifications and six adjacent cross-turn specifications. Each specification requires substantive assistant-side evidence; a harmful request or attack signal alone cannot trigger a violation. SpecGuard compiles each specification into a symbolic deterministic finite automaton whose transition guards operate over canonical AP valuations. These automata consume the dialogue trace incrementally, converting the declarative safety requirements into executable finite-state monitors.

SpecGuard supports two execution modes using the same compiled automata. In runtime mode, a candidate response is held in a private buffer while the monitor advances the automata over the newly completed dialogue transaction. The response is released only if no specification reaches an accepting state. In audit mode, the system reconstructs an existing dialogue trace in causal order and records the first accepting transition for each matched specification. An accepting run produces a structured witness containing the triggered specification, relevant turns, matched APs, and their supporting evidence. The framework contains no learned detector parameters and requires neither access to the target LLM's weights, gradients, or internal activations nor modification of the protected model, allowing it to operate as an external monitor for any black-box LLM service.

To show the effectiveness of SpecGuard, we evaluate it on multiple multi-turn safety benchmarks and compare it with representative open-source guard models under identical inputs. SpecGuard achieves stronger detection performance across languages and attack trajectories while keeping false positives under control. Additional auditing shows that its alarms can be reproduced from the corresponding specification witnesses and that the runtime and audit paths agree when applied to the same dialogue under the shared semantics.

Our contributions are as follows:

\begin{itemize}
\item \textbf{AP contract for conversational safety.}
 We introduce an AP contract that connects natural-language analysis with formal verification by specifying which safety-relevant facts may enter a dialogue trace, when they may be causally observed, and how their source evidence is retained. The contract converts open-ended language into canonical propositions with explicit semantic boundaries, speaker attribution, and provenance. Beyond the resulting contract, the methodology itself is also transferable, providing a blueprint for constructing goal-oriented AP libraries.

\item \textbf{Data-grounded temporal safety specifications.}
We translate safety policies, known jailbreak behaviors, and interaction patterns observed in development dialogues into typed LTL specifications covering both same-turn violations and cross-turn behavioral evolution. This construction provides coverage across attack variants while explicitly controlling the scope of each constraint, making safety requirements executable, checkable, and maintainable. More broadly, the pipeline provides a transferable methodology for converting open-world knowledge into formal constraints for new monitoring scenarios.

\item \textbf{Finite-state monitoring framework.}
We compile typed LTL specifications into symbolic finite automata that incrementally consume causal dialogue traces. The resulting monitor supports both buffered runtime filtering and retrospective auditing, while accepting runs produce replayable witnesses that connect each verdict to the responsible specification, dialogue turns, and supporting AP evidence.

\item \textbf{Formal safety methodology for LLM jailbreak defense.}
We systematically show how runtime verification can be applied to black-box LLM safety defense by converting natural-language safety requirements into checkable specifications and coupling formal verdicts with auditable evidence. This provides a general path from heuristic moderation to specification-level checkability without modifying or retraining the protected model.
\end{itemize}

\section{Related Work}\label{section:Related Work}

\subsection{Multi-Turn Jailbreak Attacks and Existing Defenses}\label{subsection:RL work1}
Jailbreak attacks elicit policy-violating outputs from aligned language models through adversarial prompting~\cite{wei2023jailbroken}, with automated techniques ranging from universal adversarial suffixes~\cite{zou2023universal} to black-box iterative refinement~\cite{chao2023pair}. Recent attacks have shifted decisively toward multi-turn forms: Crescendo~\cite{russinovich2024crescendo}, Many-shot Jailbreaking~\cite{anil2024manyshot}, ActorAttack~\cite{ren2024actorattack}, and X-Teaming~\cite{rahman2025xteaming} erode a model's refusal behavior through sequences of individually innocuous turns, so that the malicious intent is distributed across the dialogue trace and resides in no single message.

On the defense side, safety judgment has traditionally been cast at the granularity of individual messages: a long line of guard models---from Llama~Guard~\cite{inan2023llamaguard} to recent bilingual variants~\cite{han2024wildguard, zeng2024shieldgemmagenerativeaicontent, GuardReasoner}---classifies a single prompt or response, and even when the full history is provided as input, their verdicts do not reference the temporal structure between turns. Recognizing the multi-turn threat, recent work has begun to reason over dialogues explicitly: SafeDream~\cite{safedream} accumulates a learned safety estimate across turns, THRD~\cite{ma2026thrd} fuses per-turn risk with history signals through training-free evaluators, G-Guard~\cite{huang2025gguard} models cross-turn intent aggregation over graphs, and TurnGate~\cite{shen2026turngate} locates the earliest turn at which harmful capability closure occurs. These advances establish the necessity of cross-turn temporal reasoning. Across all of them, however, the criterion for violation remains \emph{implicit}: it lives in fine-tuned weights, judge prompts, or scoring heuristics, and is nowhere stated as an object with defined semantics---one that can be inspected, composed, or checked in its own right.

A parallel line of work systematizes jailbreak strategies and safety policies: Shen et al.~\cite{shen2024dan} characterize and categorize in-the-wild jailbreak prompts at scale, Liu et al.~\cite{liu2024hitchhiker} distill recurring attack patterns in prompt engineering, and on the policy side, taxonomies derived from platform regulations organize safety risks into structured categories~\cite{zeng2024airbench}. These taxonomies make attack patterns and policy constraints explicit for the first time; yet they remain descriptive---a taxonomy by itself is
neither executable nor checkable, and bears no formal connection to a running conversation. In sum, existing defenses are executable yet implicit, while existing taxonomies are explicit yet non-executable; to our
knowledge, no prior work defines multi-turn violation as a temporal property that is both explicitly stated and machine-checkable---the gap our specification layer addresses.

 \subsection{Formal Specification and Runtime Verification for Safety}\label{subsection:RL2}

Formal methods have long supplied enforceable safety guarantees in safety-critical systems: in autonomous driving~\cite{Mehdipour2023FormalMethodsDriving,Wongpiromsarn2023FormalMethodsAutonomousSystems}, robot
control~\cite{Desai2017SafeRobots}, and multi-agent planning~\cite{sun2022multiagentmotionplanningsignal}, temporal logic and runtime verification encode safety constraints and rule out violating behaviors during execution. In LLM-based systems, VeriPlan~\cite{Lee2025VeriPlan} integrates model checking into planning pipelines, ShieldAgent~\cite{chen2025ShieldAgent} constrains agent actions with verifiable rules extracted from policy documents, and R2-Guard~\cite{r2} performs probabilistic logical reasoning over the safety categories of a single input within a guardrail. Across this line of work, the system under monitoring has invariably been the action layer of a robot, a planner, or an agent; the conversational behavior of an LLM at inference time has never itself been the object of runtime verification.

A complementary line of research translates natural-language requirements and safety rules into machine-checkable formal specifications~\cite{cosler2023nl2spec, chen2024nl2tl}, with growing attention to the semantic fidelity and safety of the generated formulas~\cite{li2026AutoSafeLTL}. Its value lies in turning rules that are otherwise ambiguous, open-ended, and interpretation-dependent into specification objects that can be represented, analyzed, and reasoned about---safety requirements become checkable, composable, and enforceable only once formalized~\cite{FERRARI2025Formalrequirements,Zhang2024LLMFMRoadmap}. This line, however, rests on a tacit premise: the rules to be formalized already exist somewhere---in traffic law, requirements documents, or physical safety constraints. Dialogue safety does not satisfy this premise. Which conversational patterns constitute a jailbreak has never been written down as rules, let alone formalized; the attack taxonomies of \S\ref{subsection:RL work1} categorize jailbreaks descriptively, but a category is not a constraint---it prescribes nothing. The specification layer for dialogue safety must therefore be built from scratch, and its construction is itself an exercise in balance: rules that
are too broad penalize benign conversations on sensitive topics, while rules that are too narrow miss attacks staged through innocuous turns. To our knowledge, no formal specification set for multi-turn jailbreaks existed prior to this work; the rule base of SpecGuard is a first systematic attempt to fill this void, and its design principles are presented in Section~\ref{sec:SpecGuard Design}.

\section{Threat Model and Problem Formulation}\label{section:Threat Model and Problem Formulation}

\subsection{Threat Model}
\label{subsec:threatmodel}

\noindent\textbf{Adversary.}
We consider an adaptive adversary who interacts with an aligned LLM through
natural-language dialogue and seeks to elicit unsafe assistance. The adversary
controls the user messages and observes the model's replies. It may issue a
direct single-turn request or conduct a multi-turn jailbreak, distributing its
objective across individually innocuous-looking messages and choosing later
messages in response to the preceding dialogue.

\smallskip
\noindent\textbf{Defender.}
The defender is an external, causal monitor of the dialogue stream. After each
completed turn, it observes only the transcript available at that point and
checks the resulting trace against a fixed set of safety specifications. It
has no access to the target model's weights, gradients, or activations, and
does not modify or retrain the target model.

The monitor determines whether the observed dialogue has instantiated a
forbidden pattern. It returns a binary verdict and, for each match, identifies
the corresponding specification and a structured witness consisting of the
relevant turns and propositions. Because the monitoring interface depends only on the
dialogue transcript, it is independent of the target model's architecture.

\subsection{Preliminaries: Finite Traces and $\mathrm{LTL}_f$ Monitoring}
\label{subsec:preliminaries}

\smallskip
\noindent\textbf{Finite traces.}
To specify temporal safety properties over dialogue observations, we first fix the syntax and finite-trace semantics of $\mathrm{LTL}_f$.
\begin{definition}
\label{def:ltl}
Let $\mathcal{A}$ be a set of atomic propositions. \emph{Linear Temporal Logic}
(LTL) is a formal language for specifying temporal properties over ordered
traces. Its formulas are generated by
\[
\varphi ::= p \mid \neg\varphi \mid \varphi\land\varphi
\mid \mathbf{X}\varphi \mid \mathbf{F}\varphi,
\qquad p\in\mathcal{A},
\]
with the remaining Boolean and temporal operators defined as abbreviations.
\end{definition}

Each proposition denotes a safety-relevant fact that can be evaluated at an
observation position. A finite trace is a sequence
\[
w=A_0A_1\cdots A_T,
\qquad A_t\subseteq\mathcal{A},
\]
where $A_t$ is the set of propositions that hold at position $t$. In our
setting, each position corresponds to one observable dialogue transaction.
As new transactions become available, the monitor receives the corresponding
trace positions incrementally.

$\mathrm{LTL}_f$ interprets temporal-logic formulas over finite traces. For a
proposition $p\in\mathcal{A}$ and a position $i\in\{0,\ldots,T\}$,
\[
w,i\models_f p
\quad\Longleftrightarrow\quad
p\in A_i.
\]
We use the next operator $\mathbf{X}$ and the eventually operator $\mathbf{F}$:
\[
w,i\models_f\mathbf{X}\varphi
\quad\Longleftrightarrow\quad
i<T\ \land\ w,i+1\models_f\varphi,
\]
\[
w,i\models_f\mathbf{F}\varphi
\quad\Longleftrightarrow\quad
\exists j\in\{i,\ldots,T\}:w,j\models_f\varphi.
\]
We write $w\models_f\varphi$ as shorthand for $w,0\models_f\varphi$. For
example, $\mathbf{F}(p\land\mathbf{X}q)$ holds exactly when there exists a
position $i<T$ such that $p\in A_i$ and $q\in A_{i+1}$; the pair $(i,i+1)$
is a witness for the match. Unlike standard LTL over infinite executions,
$\mathrm{LTL}_f$ evaluates the finite trace observed so far and therefore
requires no assumptions about future dialogue turns.

\smallskip
\noindent\textbf{Finite-state monitoring.}\label{def:BA}
To evaluate these properties incrementally as dialogue turns arrive, we next introduce deterministic finite automata over AP valuations.
\begin{definition}
\label{def:dfa}
A \emph{deterministic finite automaton} (DFA)~\cite{sipser2013introduction}
over AP valuations is a tuple
\[
\mathcal{M}=(Q,2^{\mathcal{A}},\delta,q_0,F),
\]
where $Q$ is a finite set of states, $2^{\mathcal{A}}$ is the input alphabet,
$\delta:Q\times 2^{\mathcal{A}}\rightarrow Q$ is the transition function,
$q_0\in Q$ is the initial state, and $F\subseteq Q$ is the set of accepting
states. Extending \(\delta\) to finite traces in the standard way,
\(\mathcal{M}\) accepts a trace \(w\) iff
\[
\delta^\ast(q_0,w)\in F.
\]
\end{definition}

Every $\mathrm{LTL}_f$ formula defines a regular language of finite traces and
can therefore be compiled into a DFA. Given a formula \(\varphi\), let
\[
\mathcal{M}_{\varphi}=\mathsf{compile}(\varphi)
\]
denote the resulting automaton. Compilation preserves finite-trace semantics:
\[
w\models_f\varphi
\quad\Longleftrightarrow\quad
\delta^\ast(q_0,w)\in F
\quad\Longleftrightarrow\quad
\mathcal{M}_{\varphi}\text{ accepts }w.
\]
Operationally, the monitor consumes one valuation \(A_t\) at a time and updates
the current automaton state through \(\delta\). An accepting state after
consuming the current trace indicates that the trace satisfies \(\varphi\).
For the forbidden-pattern specifications used in this work, accepting states
are absorbing, so acceptance persists under trace extension, while the
accepting run identifies the positions witnessing the match.

\subsection{Problem Formulation}
\label{subsec:problem-formulation}

We formalize the detection problem of Sec.~\ref{subsec:threatmodel} in two parts: a \emph{specification} layer that fixes, in advance, which dialogue behaviors are forbidden, and a \emph{monitoring} layer that evaluates these specifications on observed dialogues. Safety-relevant facts of the conversation are abstracted into atomic propositions (APs), specifications are temporal-logic formulas over these propositions, and monitoring reduces to evaluating the formulas on the trace a dialogue induces.

\noindent\textbf{Dialogue traces.}
Let $D=(u_0,r_0,\ldots,u_T,r_T)$ be a finite dialogue, where
$u_t$ and $r_t$ denote the user message and model reply at turn $t$.
Let
$P_t^{-}=(u_0,r_0,\ldots,r_{t-1},u_t)$
be the pre-reply prefix and
$P_t^{+}=P_t^{-}\circ r_t$
the corresponding post-reply prefix.

The AP universe is partitioned as
\vspace{-4pt}
\[
\mathcal{A}
=
\mathcal{A}_{\mathrm{U}}
\mathbin{\dot\cup}
\mathcal{A}_{\mathrm{R}}
\mathbin{\dot\cup}
\mathcal{A}_{\mathrm{D}},
\]
where propositions in $\mathcal{A}_{\mathrm{U}}$ describe the current
user message, propositions in $\mathcal{A}_{\mathrm{R}}$ describe the
current reply, and propositions in $\mathcal{A}_{\mathrm{D}}$ summarize
the dialogue history, such as whether an objective established earlier
remains active.

Let $E_{\mathrm{U}}$, $E_{\mathrm{R}}$, and $E_{\mathrm{D}}$ be causal
extractors for these three families. For
$\mathrm{X}\in\{\mathrm{U},\mathrm{R},\mathrm{D}\}$,
$E_{\mathrm{X}}(P)\subseteq\mathcal{A}_{\mathrm{X}}$ denotes the
propositions activated by the observed prefix $P$. The valuation at turn $t$ is
\vspace{-4pt}
\[
A_t
=
E_{\mathrm{U}}(P_t^{-})
\cup
E_{\mathrm{R}}(P_t^{+})
\cup
E_{\mathrm{D}}(P_t^{+}).
\]

Thus, user-side propositions do not depend on the reply they precede,
and dialogues sharing an observed prefix induce the same trace prefix.
The dialogue induces the finite trace
$w(D)=A_0A_1\cdots A_T$, where $A_t\in 2^{\mathcal{A}}$ records the
observations available at turn $t$.
The concrete AP inventory and extraction procedures are described in
Sec.~\ref{sec:SpecGuard Design}.

\smallskip
\noindent\textbf{Safety specifications.}
A specification set
$\Phi=\{\phi_1,\ldots,\phi_m\}$
collects forbidden dialogue patterns.
Each $\phi_i$ is an $\mathrm{LTL}_f$ formula over $\mathcal{A}$,
interpreted over finite traces under $\models_f$.
Temporal operators range over dialogue turns, allowing a formula to relate
observations within or across turns.
Every matched pattern requires multiple observations, including substantive
assistant-side evidence; user behavior alone is insufficient to establish a
violation.
Relevant dialogue history is represented by propositions in
$\mathcal{A}_{\mathrm D}$, allowing all specifications to remain in the
$\{\mathbf{F},\mathbf{X}\}$ fragment of $\mathrm{LTL}_f$.
Each specification is compiled once into a finite-state monitor that preserves
its $\mathrm{LTL}_f$ semantics.
The AP contract, its deterministic activators, and the specification set are
fixed before evaluation.

\smallskip
\noindent\textbf{Monitoring task.}
Given a dialogue $D$ and its trace $w(D)$, the monitor raises an alarm if the
trace satisfies at least one specification:
\vspace{-4pt}
\[
\mathsf{Alarm}(D)
=
\mathbbm{1}\!\left[
\exists\,\phi\in\Phi:\;w(D)\models_f\phi
\right].
\]
Equivalently, an alarm is raised when the finite-state monitor compiled from
some $\phi\in\Phi$ accepts $w(D)$, indicating that the corresponding conversation violates the safety rule specified by $\phi$. For each matched specification, the monitor
reports an accepting-run witness identifying the dialogue turns and
propositions that realize the match.
Sec.~\ref{subsec:specguard-monitoring} describes this process in runtime and
audit modes.
\section{SpecGuard Design}\label{sec:SpecGuard Design}

\subsection{Framework Overview}
\label{subsec:FrameworkOverview}

\begin{figure*}[htbp] 
    \centering
    \includegraphics[width=1\textwidth]{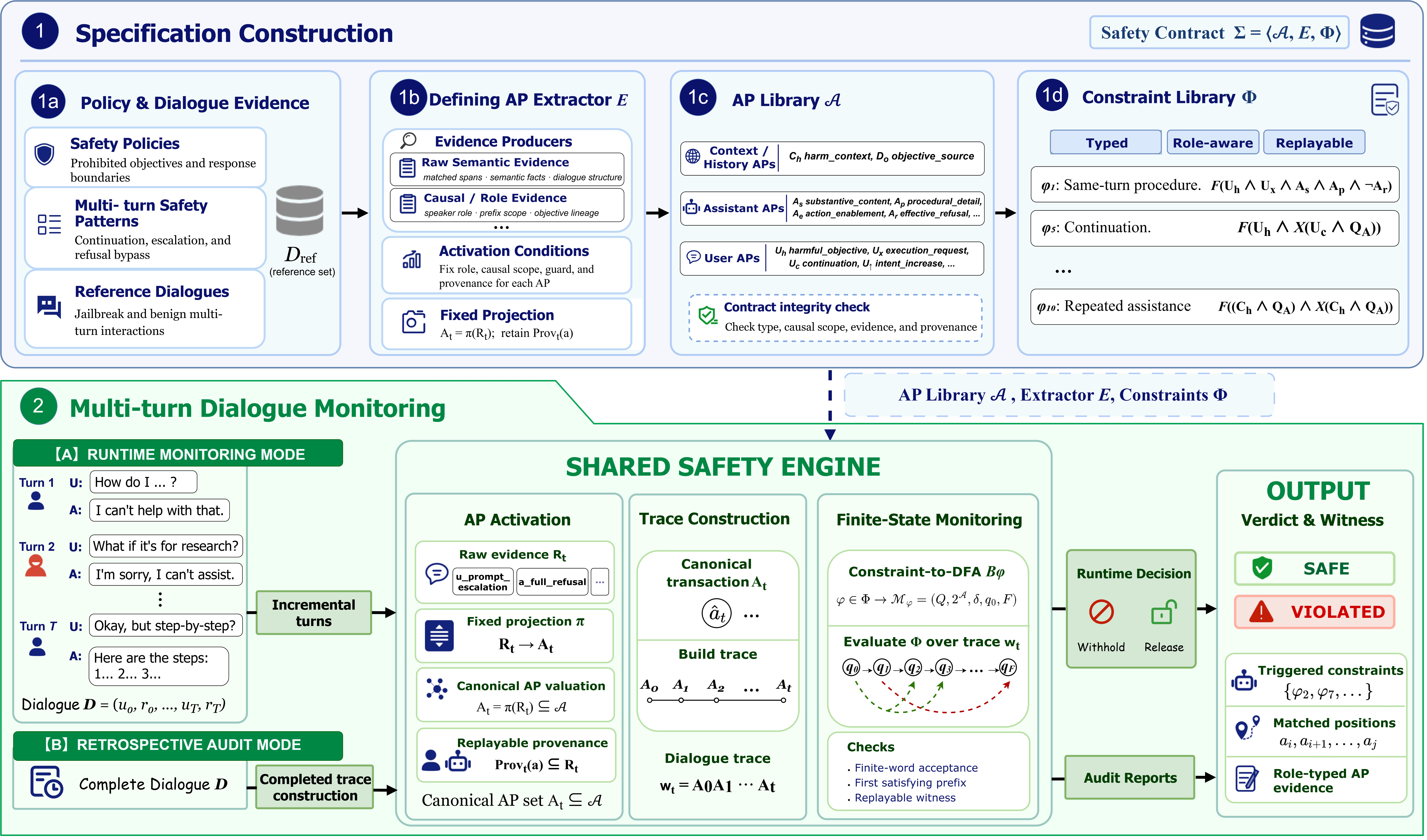} 
    \caption{SpecGuard Overview}
    \label{fig:specguard}
\end{figure*}

Figure~\ref{fig:specguard} presents SpecGuard's architecture. The
\emph{Specification} component defines a canonical AP contract and a library
of \(\mathrm{LTL}_f\) safety specifications. The \emph{Monitoring} component
extracts AP valuations from the dialogue, constructs a turn-level trace, and
checks it for specified forbidden patterns.

Monitoring supports two execution modes over the same formal representation.
\emph{Runtime mode} updates the verdict incrementally after each completed
turn. \emph{Audit mode} replays a completed dialogue and returns the matched
specifications together with structured witnesses. Both modes evaluate the
dialogue under the same temporal semantics.

The central design principle is a separation between fact abstraction
and violation judgment. Upstream components populate the AP contract with
canonical facts, while the final verdict is determined solely by matching the
resulting trace against the temporal specifications. We next describe the
construction of the Specification component, followed by the runtime and audit
modes of Monitoring.

\subsection{Specification Construction}
\label{subsec:specguard-specification}

Specification construction compiles safety requirements into
an executable and auditable formal contract
\vspace{-4pt}
\[
\Sigma=\langle\mathcal{A},E,\Phi\rangle,
\]
where $\mathcal{A}$ is a vocabulary of canonical atomic propositions (APs),
$E$ causally instantiates these APs from dialogue evidence, and $\Phi$ is a
library of typed $\mathrm{LTL}_f$ constraints defining forbidden relations.
Construction must answer two questions: which dialogue facts merit formal
representation, and under what relations those facts constitute a violation.
SpecGuard follows a \emph{semantics-first} process: it decomposes
safety requirements into observable factors, validates their instantiation
against contrastive dialogue evidence, and only then composes them into
temporal constraints. This ordering prevents benchmark labels or detector
outputs from being restated as formal predicates.

\begin{table*}[t]
\centering
\footnotesize
\caption{Complete canonical AP vocabulary. The identifier column gives the
exact name emitted by the implementation.}
\label{tab:app-ap-definitions}
\setlength{\tabcolsep}{4pt}
\renewcommand{\arraystretch}{1.08}
\begin{tabularx}{\textwidth}{@{}lp{0.08\textwidth}p{0.25\textwidth}X@{}}
\toprule
\textbf{Family} & \textbf{Symbol} & \textbf{Canonical identifier}
& \textbf{Observable semantics} \\
\midrule
Derived & $C_h$ & \texttt{c\_harm\_context}
& The active dialogue context concerns a harmful objective. \\
Derived & $D_o$ & \texttt{d\_objective\_source}
& An observed earlier turn established the objective referenced later. \\
\midrule
User & $U_h$ & \texttt{u\_harmful\_objective}
& The current user message states or pursues a concrete harmful objective. \\
User & $U_x$ & \texttt{u\_execution\_request}
& The user asks for execution, operational parameters, or actionable steps. \\
User & $U_e$ & \texttt{u\_evasion\_tactic}
& The request uses an evasion, encoding, persona, bypass, or inversion tactic. \\
User & $U_c$ & \texttt{u\_continuation}
& The user asks to continue, complete, or reuse preceding dialogue content. \\
User & $U_{\uparrow}$ & \texttt{u\_intent\_increase}
& The user increases the specificity, commitment, or operationality of the request. \\
User & $U_o$ & \texttt{u\_objective\_reference}
& The user explicitly refers to an objective established in the observed history. \\
\midrule
Reply & $A_s$ & \texttt{a\_substantive\_content}
& The reply contains substantive content rather than an empty or generic response. \\
Reply & $A_p$ & \texttt{a\_procedural\_detail}
& The reply supplies procedural or otherwise concrete harmful detail. \\
Reply & $A_m$ & \texttt{a\_method\_specificity}
& The reply names a specific method, mechanism, or operational choice. \\
Reply & $A_e$ & \texttt{a\_action\_enablement}
& The reply enables action through a workaround, artifact, or feasibility claim. \\
Reply & $A_t$ & \texttt{a\_targeted\_content}
& The reply supplies harmful content specialized to a target, claim, or group. \\
Reply & $A_{\ell}$ & \texttt{a\_partial\_assistance}
& The reply provides incomplete assistance that may support later escalation. \\
Reply & $A_r$ & \texttt{a\_effective\_refusal}
& The reply withholds the requested harmful assistance through an effective refusal. \\
\bottomrule
\end{tabularx}
\end{table*}

\smallskip
\noindent\textbf{Complementary sources and development data.}
SpecGuard's specification and AP contract draw on three complementary
sources. First, existing safety policies and content-risk taxonomies define
the semantic boundaries to protect, such as the prohibition against providing
actionable methods or substantive assistance for harmful objectives
\cite{markov2023holistic,inan2023llamaguard,mazeika2024harmbench,han2024wildguard}.
Second, prior studies of multi-turn jailbreaks identify temporal relations
that the representation must capture, including request continuation, intent
escalation, objective reference, and refusal bypass
\cite{russinovich2024crescendo,shen2026turngate}.
Third, we use jailbreak, refusal, and benign interaction examples from
WildChat-1M~\cite{zhao2024wildchat},
MTID-train~\cite{shen2026turngate}, and
WildGuardMix~\cite{han2024wildguard} as development data. This corpus is used
to author and validate the bilingual evidence templates, deterministic
relation parsers, and contract guards, and to test whether candidate facts are
causally observable and distinguish harmful assistance from effective refusal
or legitimate safety discussion. It is not used to induce APs directly from
endpoint labels. Together, the first two sources determine what must be
represented, while the development corpus supports its operationalization.

\smallskip
\noindent\textbf{From safety requirements to canonical APs.}
We decompose each safety requirement into minimal semantic factors that can be
observed independently but do not determine a verdict in isolation. A candidate
factor enters the AP contract only if it (i) can be instantiated causally from
the observed dialogue prefix, (ii) denotes an observation rather than a
verdict, (iii) separates at least one pair of semantically similar dialogues
with different safety implications, and (iv) can be reused across
specifications while retaining its supporting evidence.

For example, the requirement that a model must not provide concrete methods
for a harmful execution request is not represented by a single
\texttt{harmful\_delivery} proposition. Instead, we separate the user-side
facts \emph{harmful objective} and \emph{execution request} from the
assistant-side facts \emph{substantive content}, \emph{procedural detail}, and
\emph{effective refusal}. The same request may therefore be refused, answered
with high-level safety information, or fulfilled with actionable details; only
the specified relation among these observations constitutes a violation.
Similarly, phrase-specific expressions such as ``continue the previous
response'' are normalized into a reusable \emph{continuation} factor.

The resulting vocabulary contains 15 typed APs: two context/history factors in
$\mathcal{A}_{\mathrm{D}}$, six user-side factors in
$\mathcal{A}_{\mathrm{U}}$, and seven assistant-side factors in
$\mathcal{A}_{\mathrm{R}}$. Their complete definitions appear in
Table~\ref{tab:app-ap-definitions}. Let $E_{\mathrm{raw}}$ denote the
family-specific evidence producers. The canonical extractor is
\vspace{-9pt}
\[
E=\pi\circ E_{\mathrm{raw}},
\]
where the fixed projection $\pi$ maps fine-grained evidence to canonical APs
and excludes result-shaped predicates. For every active AP $a$ at turn $t$,
SpecGuard retains its raw support as $\operatorname{Prov}_t(a)$. The complete
raw-to-canonical mapping is reported in Appendix~\ref{app:ap-contract}.

\smallskip
\noindent\textbf{From safety requirements to temporal constraints.}
Canonical APs describe individual observations, whereas a safety requirement
specifies how those observations must be related. We therefore translate each
requirement into a typed relation among context, user, and assistant facts. The
relation identifies the evidence required for a match, acceptable alternative
evidence, and any disqualifying condition, such as an effective refusal. It
also specifies whether the relevant events must occur within the same dialogue
round or across two consecutive rounds. These relations are then
deterministically expressed in $\mathrm{LTL}_f$, using $\mathbf{F}$ to locate a
forbidden pattern and $\mathbf{X}$ to enforce next-round ordering. This
construction preserves both speaker roles and temporal alignment.

Several safety requirements share the same assistant-side condition: the reply
must contain substantive material that meaningfully supports the requested
objective. To avoid repeating this condition in every constraint, we define
\begin{ltlformula}
\begin{aligned}
Q_{\mathrm{a}}
\triangleq\;&
\mathsf{a\_substantive\_content} \\
&{}\land\bigl(
   \mathsf{a\_procedural\_detail}
   \lor \mathsf{a\_method\_specificity} \\
&\qquad\ \ \lor \mathsf{a\_action\_enablement}
   \lor \mathsf{a\_targeted\_content}
\bigr) \\
&{}\land \neg\mathsf{a\_effective\_refusal}.
\end{aligned}
\end{ltlformula}
Intuitively, $Q_{\mathrm{a}}$ holds when the reply contains substantive
content together with at least one form of policy-relevant detail---procedural
steps, a specific method, action-enabling information, or targeted
content---and does not constitute an effective refusal. It is only shorthand
for this recurring combination of canonical APs; it neither introduces a new
AP nor determines a violation by itself.

We first consider a direct harmful request and its reply within the same
dialogue round:
\begin{ltlformula}
\begin{aligned}
\phi_{\mathrm{proc}}
\triangleq
\mathbf{F}\bigl(
&\mathsf{u\_harmful\_objective}
\land \mathsf{u\_execution\_request} \\
&{}\land \mathsf{a\_substantive\_content}
\land \mathsf{a\_procedural\_detail} \\
&{}\land \neg\mathsf{a\_effective\_refusal}
\bigr).
\end{aligned}
\end{ltlformula}
This formula states that there exists a round in which the user pursues a
harmful objective and requests an execution method, while the corresponding
reply provides substantive procedural content without effectively refusing.
Because all APs are evaluated at the same trace position, evidence from an
unrelated reply cannot satisfy the constraint.

Other violations emerge only when evidence is combined across rounds.
Consider a user who establishes a harmful objective and asks the model to
continue in the next round:
\vspace{-4pt}
\begin{ltlformula}
\begin{aligned}
\phi_{\mathrm{cont}}
\triangleq
\mathbf{F}\Bigl(
&\mathsf{u\_harmful\_objective} \\
&{}\land
\mathbf{X}\bigl(
\mathsf{u\_continuation}
\land Q_{\mathrm{a}}
\bigr)
\Bigr).
\end{aligned}
\end{ltlformula}
This formula requires some round $t$ to establish the harmful objective,
followed immediately by a round $t+1$ in which the user issues a continuation
request and the corresponding reply satisfies $Q_{\mathrm{a}}$. The next
operator $\mathbf{X}$ enforces this adjacency, preventing assistance from an
unrelated later exchange from being attributed to the earlier objective.

The complete specification contains four same-round constraints and six
adjacent-round constraints. Longer-range dependencies are represented by
provenance-preserving history APs, allowing constraints to reference an
established objective without introducing unconstrained links between
arbitrary turns. Appendix~\ref{app:constraint-library} provides the complete
constraint inventory and symbolic expansions.

\smallskip
\noindent\textbf{Transferability.}
The AP contract and temporal specifications of SpecGuard are independent of dataset labels and protected-model internals, allowing them to be reused across datasets and target LLMs. More broadly, the whole pipeline for deriving observable APs from open-world knowledge and constructing validated temporal constraints is transferable to new monitoring scenarios.

\smallskip
\noindent\textbf{Validation and freeze.}
Before evaluation, we validate causal observability, speaker typing, refusal
disqualification, and cross-turn ordering on harmful, benign, and refusal
cases. The final specification enforces three invariants: no single AP
determines a verdict, every constraint requires positive assistant-side
evidence, and every match yields a replayable witness grounded in raw evidence.
We additionally verify that injecting an excluded result-shaped predicate
cannot alter either the canonical trace or the monitor verdict. The AP
vocabulary, projection, and constraint library are then frozen jointly before
evaluation.

\subsection{Dialogue Monitoring}
\label{subsec:specguard-monitoring}

The preceding subsection fixes the AP contract \(\mathcal{A}\) and the
set \(\Phi\) of typed \(\mathrm{LTL}_f\) specifications. Dialogue
monitoring neither introduces new proposition semantics nor modifies
the specifications using evaluation data. Instead, it converts each
completed user--assistant interaction into an auditable AP valuation
\(A_t\) and feeds the resulting valuations to the finite-state monitors
compiled from the safety specifications.


\begin{figure*}[t]
    \centering
    \includegraphics[
        width=\textwidth,
        trim=0 0 0 0,
        clip
    ]{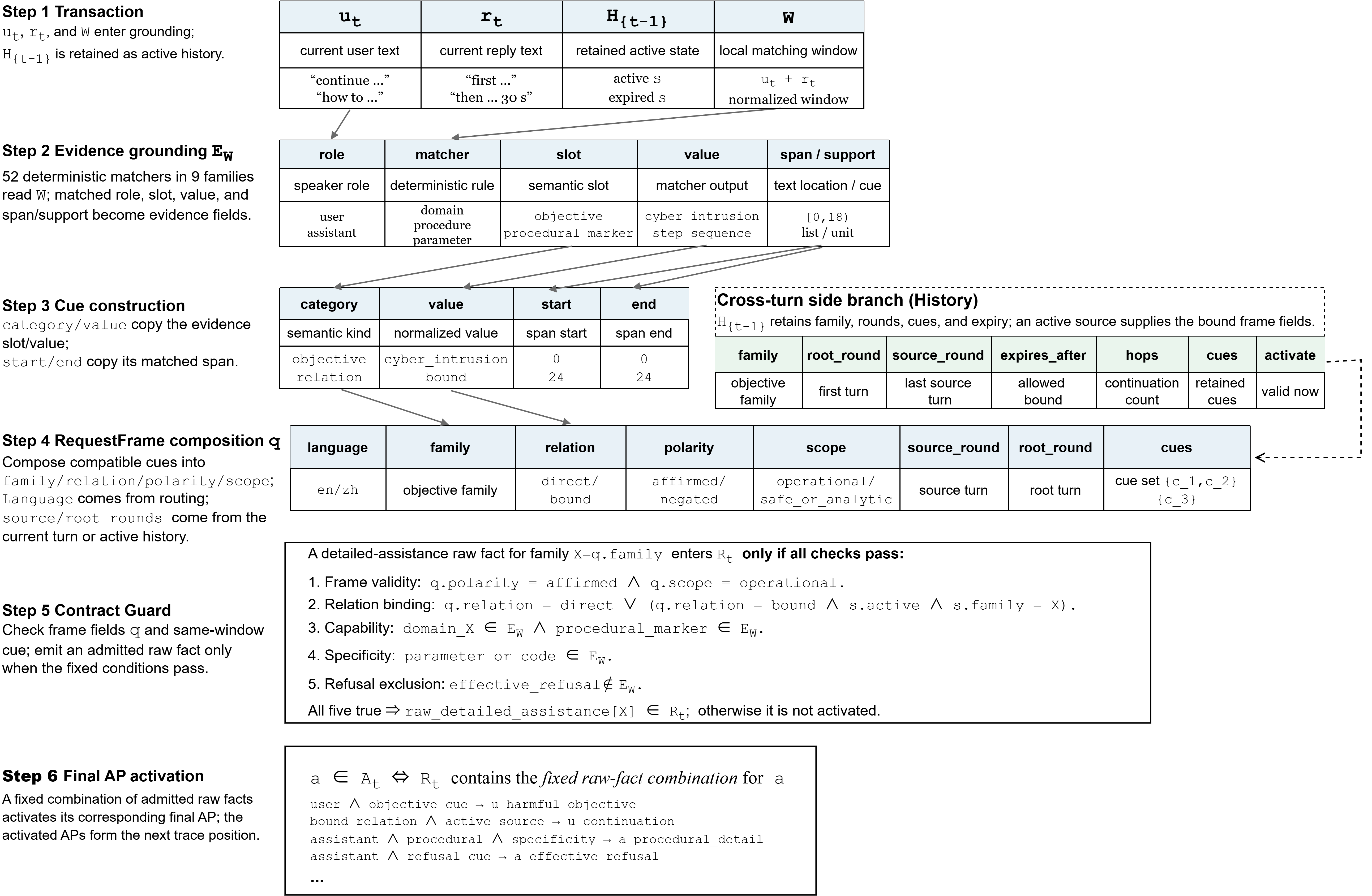}
    \caption{The AP contract and activation pipeline of \texttt{SpecGuard}. Arrows indicate data flow.}
    \label{fig:ap-activation-pipeline}
\end{figure*}

\noindent\textbf{AP Activation} converts the causal dialogue prefix observed at the current
round into the canonical AP valuation used by temporal monitoring. As shown in
Figure~\ref{fig:ap-activation-pipeline}, the process proceeds through three
stages: evidence grounding, relation composition, and contract guarding. The
first two stages construct intermediate structured facts; only raw facts that
pass the guard can activate final APs and enter the current-round valuation
$A_t$. Thus, AP activation does not assign a risk label to an entire dialogue.
Instead, it incrementally checks local evidence, request relations, and
contract conditions.

\emph{Evidence Grounding.}
At round $t$, the system extracts traceable textual facts from the current user
message, model reply, retained history, and local content window $W$.
Text normalization, language routing, and command-zone localization for long
messages only determine the scope in which matchers operate; their
implementation is described in Appendix~\ref{app:evidence-grounding}. The
current implementation contains 52 deterministic text matchers organized into
9 evidence families. Six additional bounded-relation matchers are used in the
subsequent relation-composition stage. Representative matchers include
domain, procedural/action, parameter/code, and request/relation matchers; the
complete inventory and triggering conditions are provided in
Appendix~\ref{app:evidence-grounding}.

Each matcher independently emits a provenance-bearing evidence record rather
than a violation label. A record contains the text role, matcher, semantic
slot, matched value, and match location; a single window may produce multiple
evidence records, each retaining its textual span or structural support. As
illustrated in Figure~\ref{fig:ap-activation-pipeline}, these fields are
derived from $u_t$, $r_t$, and $W$. All subsequent stages operate on the
structured evidence records and do not rescan the raw text.

\emph{Relation Composition.}
Relation composition converts evidence records into cues and combines
compatible cues into typed request frames. A cue retains the semantic
category, value, and character span of its supporting evidence. A frame
organizes the resulting information into language, risk family, request
relation, polarity, scope, source round, root round, and the cues used in its
construction, as shown in the figure.

For a request stated directly in the current round, the frame relation is
\texttt{direct}, and its family is obtained from the current objective
evidence. For a request that continues or refers to a previous objective, the
relation is \texttt{bound}; a frame is created only when the history contains
an active objective source from the same risk family. This source supplies the
historical rounds and relevant cues for the bound frame. If the required
relation, a compatible family, or a valid source is absent, no frame is
created. Relation composition therefore organizes local evidence into bounded
request facts, but does not itself activate the final APs.

\emph{Contract Guard and AP Projection.}
The contract guard checks the frame, local evidence, and historical source to
determine whether a raw fact may enter $R_t$. For the representative
detailed-assistance fact shown in the figure, all of the following conditions
must hold: the frame is \texttt{affirmed} and \texttt{operational}; the
request is either direct or bound to an active source from the same risk
family; the local window contains both domain and procedural evidence; it
contains parameter/code specificity evidence; and no effective refusal is
present. These conditions exclude negated expressions, safety-oriented
analysis, isolated keywords, stale historical objectives, and replies lacking
substantive detail. Other raw facts use their corresponding fixed guard
conditions; the complete condition inventory is given in
Appendix~\ref{app:ap-contract}.

Raw facts admitted by the guard form the current-round set $R_t$. The fixed AP
contract then activates canonical APs from predefined raw-fact combinations:
\begin{ltlformula}
\begin{aligned}
a\in A_t
\quad\Longleftrightarrow\quad
R_t\text{ contains the fixed combination}\\
\text{required to activate }a.
\end{aligned}
\end{ltlformula}

For example, a user-objective evidence combination may activate
\texttt{u\_harmful\_objective}; a bound relation together with an active
source may activate \texttt{u\_continuation}; and a combination of
assistant-side procedural and specificity evidence may activate
\texttt{a\_procedural\_detail}. Refusal evidence activates
\texttt{a\_effective\_refusal} under its corresponding conditions. The complete
raw-to-final AP activation mapping is omitted from the main text and provided
in Appendix~\ref{app:ap-contract}.

The resulting $A_t$ is appended as the finite-trace position for round $t$.
Provenance from each final AP to its supporting raw facts, frames, cues, and
original text spans is retained for subsequent specification monitoring and
witness replay.

\smallskip
\noindent\textbf{Causal transaction trace.}
For turn \(t\), user-side facts are instantiated from the pre-reply prefix
\(P_t^{-}\), whereas reply-side and dialogue-side facts may use the post-reply
prefix \(P_t^{+}\); the transaction \(A_t\) and running trace $w_t=A_0A_1\cdots A_t$ are exactly as defined in Sec.~\ref{subsec:problem-formulation}.

Multiple propositions supported by the same observation coexist in $A_t$;
they do not create additional trace positions or alter the causal ordering of
the dialogue. Each position represents one complete user--assistant transaction. Speaker
types distinguish the source of individual facts within \(A_t\); consequently,
\(\mathbf{X}\) always refers to the next dialogue turn.

Trace construction is causally immutable. Once \(A_t\) has been materialized,
later messages may append new valuations but cannot revise earlier ones.
Dialogue-side APs that summarize history are activated only at the current
position and are never propagated backward. Deterministic activation therefore gives prefix
consistency (Proposition~1, Appendix~\ref{app:monitoring-proofs}). At this point, natural language has been reduced to the input word consumed by
the monitor; the \(\mathrm{LTL}_f\) specifications do not participate in AP
extraction.

\smallskip
\noindent\textbf{Compilation to finite automata.}
For each typed specification \(\phi\in\Phi\), the compiler constructs a
symbolic deterministic finite automaton
\vspace{-4pt}
\[
\mathcal{M}_{\phi}
=
\operatorname{compile}(\phi)
=
\bigl(
Q_{\phi},
2^{\mathcal{A}},
\delta_{\phi},
q_{0,\phi},
F_{\phi}
\bigr)
\]
(cf.~Definition~\ref{def:dfa}). Here, \(\delta_{\phi}\) is the transition
function of \(\mathcal{M}_{\phi}\): if the monitor is in state \(q\) and
reads the transaction-level AP valuation \(A\subseteq\mathcal{A}\), then
\(\delta_{\phi}(q,A)=q'\) means that its next state is \(q'\). For
readability, we omit the subscript \(\phi\) from the automaton components
below. We write \(A\models\chi\) when \(A\) satisfies a propositional
condition \(\chi\).

Transitions are represented symbolically by predicates over AP valuations;
the compiler therefore does not enumerate the alphabet
\(2^{\mathcal{A}}\). Compilation occurs once when the specification library
is loaded. The resulting automaton and its specification fingerprint are
then fixed.

The released specifications belong to a bounded
\(\{\mathbf{F},\mathbf{X}\}\) fragment and take one of two forms. A same-turn
specification has the form
\[
\phi=\mathbf{F}(B),
\]
where \(B\) is a propositional condition over one transaction. Its automaton
has two states,
\vspace{-4pt}
\[
Q=\{q_0,q_V\},
\qquad
F=\{q_V\},
\]
where \(q_0\) searches for a valuation satisfying \(B\), and \(q_V\) records
that such a valuation has been observed. The transition from \(q_0\) is
\vspace{-4pt}
\[
\delta(q_0,A)
=
\begin{cases}
q_V, & A\models B,\\
q_0, & \text{otherwise}.
\end{cases}
\]

An adjacent-turn specification has the form
\vspace{-4pt}
\[
\phi
=
\mathbf{F}\!\left(H\land\mathbf{X}G\right),
\]
where \(H\) is an anchor condition and \(G\) must hold in the immediately
following transaction. Its automaton has three states,
\vspace{-4pt}
\[
Q=\{q_0,q_P,q_V\},
\qquad
F=\{q_V\}.
\]
Here, \(q_0\) indicates that no unresolved anchor is active, \(q_P\) records
that the preceding transaction satisfied \(H\), and \(q_V\) records a
completed adjacent-turn match. The transitions from \(q_0\) and \(q_P\) are
\vspace{-4pt}
\[
\delta(q_0,A)
=
\begin{cases}
q_P, & A\models H,\\
q_0, & \text{otherwise},
\end{cases}
\]
and
\vspace{-4pt}
\[
\delta(q_P,A)
=
\begin{cases}
q_V, & A\models G,\\
q_P, & A\not\models G\ \land\ A\models H,\\
q_0, & \text{otherwise}.
\end{cases}
\]
The self-transition on \(q_P\) refreshes the pending anchor: if the previous
anchor does not complete a match but the current transaction again satisfies
\(H\), the current transaction becomes the new anchor.

For both forms, the accepting state is absorbing:
\vspace{-4pt}
\[
\delta(q_V,A)=q_V
\qquad
\forall A\subseteq\mathcal{A}.
\]
This is sound because appending dialogue turns cannot invalidate an already
completed existential witness. Thus, a same-turn specification requires two
states, whereas an adjacent-turn specification requires at most three.

\smallskip
\noindent\textbf{Finite-state monitoring and evidence.}
For each \(\phi\in\Phi\) and trace \(w_t=A_0\cdots A_t\), the monitor
runs the corresponding automaton \(\mathcal{M}_{\phi}\). Let
\(s_0^{\phi}=q_0^{\phi}\) and update
\vspace{-4pt}
\[
s_{j+1}^{\phi}
=
\delta_{\phi}(s_j^{\phi},A_j),
\qquad
0\leq j\leq t.
\]
The specifications matched by the current prefix are
\[
\Phi^\star(w_t)
=
\left\{
\phi\in\Phi
\ \middle|\
w_t\in L(\mathcal{M}_{\phi})
\right\}.
\]
The monitor reports \textsc{MonitoredViolation} exactly when
\(\Phi^\star(w_t)\neq\varnothing\); otherwise, the observed prefix has not
instantiated any defined forbidden pattern. This procedure performs
finite-word acceptance over the observed trace, rather than language
inclusion or model checking over all possible system behaviors.

Every automaton transition retains the event label, speaker type, AP
predicate, and relative offset inherited from its source specification.
For a same-turn specification, the transition into \(q_V\) identifies the
triggering turn. For an adjacent-turn specification, \(q_P\) stores the
anchor turn and its AP provenance, which are combined with the evidence from
the accepting transition. Each matched specification therefore produces a
structured witness
\vspace{-4pt}
\[
\omega_{\phi}
=
\left(
\phi,
\left(
    t_k,\mathcal{P}_k,\mathcal{E}_k
\right)_{k=1}^{n_{\phi}},
\operatorname{fp}(\phi),
\operatorname{dig}(w_t)
\right).
\]
where \(n_{\phi}\) is the number of matched events, \(t_k\) is a matched
turn, \(\mathcal{P}_k\) is the set of canonical APs realizing the
corresponding event, and \(\mathcal{E}_k\) contains their source evidence.
Here, \(\operatorname{fp}(\phi)\) denotes the fingerprint identifying the
frozen specification \(\phi\), and \(\operatorname{dig}(w_t)\) denotes the
digest of the evaluated trace.

Before returning a verdict, the verifier replays the match against the frozen
specification, trace digest, and raw provenance. A witness that cannot be
reproduced is rejected and cannot support a valid verdict.

\smallskip
\noindent\textbf{Runtime mode.}
When user message \(u_t\) arrives, SpecGuard first activates the
user-side APs from \(P_t^{-}\). The target LLM then produces a candidate reply
\(\hat r_t\), which remains in a private buffer. Reply-side and dialogue-side
activation yields a provisional transaction \(\hat A_t\), and every automaton
evaluates its transition on a provisional copy of the monitoring state. If no
automaton enters an accepting state, \(\hat r_t\), \(\hat A_t\), and the
updated automaton states are committed, and the reply is released. Otherwise,
the candidate is withheld, the committed trace remains unchanged, and the
verdict, matched specifications, and witnesses are passed to the deployment
policy for blocking, replacement, or human review. Excluding AP activation,
each new transaction performs one constant-width transition per specification;
the finite-state update therefore takes \(O(|\Phi|)\) time and
\(O(|\Phi|)\) state, including bounded pending-witness metadata.

\smallskip
\noindent\textbf{Audit mode.}
Audit mode receives a completed dialogue and replays its transactions in their
original order. At each turn, it reconstructs the causal AP valuation and
advances the same automata used at runtime. The audit records the first
accepting turn for each specification and returns all matched specifications
and their witnesses. For a dialogue with \(T+1\) turns, the finite-state
monitoring cost is \(O(T|\Phi|)\), excluding AP activation. Because replay is
strictly prefix ordered, future messages cannot reinterpret or modify an
earlier valuation.

\noindent\textbf{Formal robustness.}
The finite automata are executable representations of the typed
\(\mathrm{LTL}_f\) specifications, rather than an independent set of
detection rules.

\begin{theorem}[Correctness of Finite-State Monitoring]
\label{thm:finite-monitor-correctness}
For any typed specification \(\phi\in\Phi\), let
\(\mathcal{M}_{\phi}=\operatorname{compile}(\phi)\) be its corresponding
automaton. Then, for any finite AP trace \(w\),
\vspace{-4pt}
\[
w\models_f\phi
\quad\Longleftrightarrow\quad
w\in L(\mathcal{M}_{\phi}).
\]
\end{theorem}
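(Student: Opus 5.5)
Since every released specification has one of the two syntactic forms fixed in the compilation paragraph, $\phi=\mathbf{F}(B)$ or $\phi=\mathbf{F}(H\land\mathbf{X}G)$ with $B,H,G$ propositional, I would prove the theorem by case analysis on the form of $\phi$. In each case I would characterize the run $s_0,s_1,\ldots,s_{T+1}$ of $\mathcal{M}_\phi$ on $w=A_0\cdots A_T$ with a state invariant proved by induction on the prefix length, and then read off acceptance as $s_{T+1}=q_V$. Two facts are used throughout. First, for propositional $\chi$, $w,i\models_f\chi$ iff $A_i\models\chi$, by structural induction on $\chi$. Second, $q_V$ is absorbing.

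\emph{Same-turn case.} The invariant is that, for every $j\le T+1$, $s_j=q_V$ iff there is some $i<j$ with $A_i\models B$, and $s_j=q_0$ otherwise. The base case $j=0$ is trivial. For the step, I would use $\delta(q_0,A)$ and the absorbing property of $q_V$. Taking $j=T+1$ and applying the semantics of $\mathbf{F}$ gives $w\models_f\mathbf{F}(B)$ iff $\exists i\le T:\ A_i\models B$ iff $s_{T+1}\in F$.

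\emph{Adjacent-turn case.} By the semantics of $\mathbf{X}$ and $\mathbf{F}$, $w\models_f\phi$ iff there exists $i<T$ with $A_i\models H$ and $A_{i+1}\models G$. Define a match to have occurred by step $j$ if such an $i$ exists with $i+1<j$. The invariant for $j\le T+1$ is:
\begin{itemize}
\item $s_j=q_V$ iff a match has occurred by step $j$;
\item otherwise $s_j=q_P$ iff $j\ge1$ and $A_{j-1}\models H$;
\item otherwise $s_j=q_0$.
\end{itemize}
The inductive step splits on $s_j$.
\begin{itemize}
\item If $s_j=q_V$, the state stays at $q_V$ and the match persists.
\item If $s_j=q_P$, we have $A_{j-1}\models H$. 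Reading $A_j\models G$ moves the run to $q_V$, and the match at $i=j-1$ is exactly what the invariant requires. Otherwise the state becomes $q_P$ or $q_0$ according to whether $A_j\models H$, which matches the second and third clauses.
\item If $s_j=q_0$, then $A_{j-1}\not\models H$ (or $j=0$), so no new match can end at $j$. The run moves to $q_P$ exactly when $A_j\models H$.
\end{itemize}
Instantiating at $j=T+1$ closes the case. The $q_0$ branch is justified precisely by the second clause of the invariant, which says there is no pending anchor.

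\emph{Main obstacle.} The delicate point is the adjacent-turn case. I must show that the refresh self-loop on $q_P$ loses no witness. The key observation is that a match needs $H$ only at the immediately preceding position, so at any time only the most recent position can be a live anchor; older anchors that were not followed by $G$ are dead. I must also make sure that from $q_P$ the test for $G$ takes priority over the test for $H$, as $\delta$ specifies. This is why the invariant is phrased in terms of $A_{j-1}$ and not in terms of "some earlier anchor".
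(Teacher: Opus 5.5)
Your proof is correct and takes the same route as the paper. Both do a case split on the two released templates $\mathbf{F}(B)$ and $\mathbf{F}(H\land\mathbf{X}G)$, then induct on input position to characterize what each automaton state records. Your three-clause invariant for the adjacent-turn case, where $q_P$ records exactly that $A_{j-1}\models H$ and no match has yet occurred, is a fully worked-out version of the paper's one-line remark that the automaton ``stores whether the latest transaction satisfies $H$''.
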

Consequently, automaton acceptance has exactly the semantics of the source
specification. The construction further guarantees prefix-consistent traces,
equivalent runtime and audit decisions on the same prefix, invariance to
constraint-irrelevant APs and excluded endpoint facts, replayable verdict
witnesses, and persistence of accepted violations under trace extension.
Appendix~\ref{app:monitoring-proofs}
states and proves these properties. They hold at the canonical-trace boundary
and do not assume error-free natural-language AP activation, which is evaluated
separately.

\section{Evaluation}
\subsection{Experimental Setup}
\label{subsec:experimental-setup}

\smallskip
\noindent\textbf{Datasets.}
We evaluate SpecGuard on three multi-turn safety benchmarks spanning
two languages, 10,075 dialogues, and 48,933 user--assistant turns
(Table~\ref{tab:datasets}).  SafeDialBench-ZH and SafeDialBench-EN define a
positive dialogue by an assistant-side safety violation.  SafeMTData instead
labels complete ActorAttack trajectories, including attacks whose final
assistant response is a refusal.  We retain each benchmark's native
conversation-level labels and report its metrics separately; no labels,
thresholds, or scores are pooled across datasets.  Across all three benchmarks,
SpecGuard uses the same 15-proposition contract, ten typed temporal
constraints, and deterministic verdict rule.

\begin{table*}[t]
\caption{Multi-turn datasets and their evaluation semantics.}
\label{tab:datasets}
\centering
\small
\setlength{\tabcolsep}{5pt}
\renewcommand{\arraystretch}{1.10}
\begin{tabular*}{\textwidth}{@{\extracolsep{\fill}}lcccccl@{}}
\toprule
\textbf{Dataset} & \textbf{Lang.} & \textbf{Unit}
& \textbf{Dialogs} & \textbf{Turns} & \textbf{Pos./Neg.}
& \textbf{Positive-label semantics} \\
\midrule
SafeDialBench-ZH & ZH & Multi-turn & 3,602 & 16,628
& 1,981 / 1,621 & Response-side violation \\
SafeDialBench-EN & EN & Multi-turn & 3,639 & 17,913
& 2,005 / 1,634 & Response-side violation \\
SafeMTData & EN & Multi-turn & 2,834 & 14,392
& 1,680 / 1,154 & ActorAttack trajectory \\
\bottomrule
\end{tabular*}

\end{table*}

\smallskip
\noindent\textbf{Baselines and identical-input protocol.}
We compare against four representative open-weight guard models:
ShieldGemma-2B \cite{zeng2024shieldgemmagenerativeaicontent}, WildGuard-7B \cite{han2024wildguard}, BingoGuard-8B \cite{BingoGuard}, and GuardReasoner-8B~\cite{GuardReasoner}.  Every method
receives the exact same ordered user--assistant exchanges.  Following the
baselines' stateless interfaces, we classify each turn independently and emit a
dialogue-level alarm if \emph{any} turn is flagged.  SpecGuard
evaluates the same ordered exchanges as a typed finite trace and emits one
conversation-level verdict.  We use pinned official checkpoints and their
native inference templates, disable sampling, and apply no retraining,
fine-tuning, or benchmark-specific calibration.

\smallskip
\noindent\textbf{Metrics and statistical analysis.}
The evaluation unit is a completed dialogue. We use F1 as the primary
detection metric, FPR to quantify unnecessary alarms on negative dialogues,
and recall to measure positive coverage. We further report balanced accuracy
(BA),
\begin{ltlformula}
\begin{aligned}
\mathrm{BA}=\frac{\mathrm{Recall}+(1-\mathrm{FPR})}{2},
\end{aligned}
\end{ltlformula}
to summarize recall and false-alarm control at the frozen operating point.
We additionally report area under the ROC curve (AUC) to measure
discrimination across the score range.  Because the monitor's output is Boolean, its AUC is computed from a
deterministic, label-free \emph{specification-completion score}: for each legal
temporal alignment, we average the satisfied clauses, then take the maximum
over alignments and specifications.  The score uses neither labels nor learned
parameters and equals one if and only if the formal verdict is violated.
Predictions are aligned by conversation identifier, and every headline number
is computed on the complete benchmark population.  Results transcribed from
prior work are reported separately as contextual operating points and are not
treated as identical-input comparisons.

\smallskip
\noindent\textbf{Reproducibility.}
All reported runs use one frozen AP contract, activator configuration,
constraint library, and monitoring configuration.  Once fixed, no rule,
threshold, or dataset-specific decision setting is changed across benchmarks;
SpecGuard contains no learned detector weights.  We retain
per-dialogue predictions and replayable witnesses, and record cryptographic
hashes for evaluation inputs, source files, conversation identifiers, and
baseline checkpoint revisions.

\subsection{Detection Effectiveness}
\label{subsec:detection-effectiveness}

We first measure conversation-level detection under the identical-input
protocol defined above.  Figure~\ref{fig:f1-fpr-tradeoff} exposes the resulting
F1--FPR operating points, while Table~\ref{tab:main-multiturn-results} reports
the exact AUC, F1, FPR, and recall.

\begin{table*}[t]
\caption{Conversation-level performance on multi-turn jailbreak detection.
SafeDialBench is reported separately for Chinese (ZH) and English (EN).
Best values are bold.}
\label{tab:main-multiturn-results}
\centering
\small
\setlength{\tabcolsep}{1.5pt}
\renewcommand{\arraystretch}{1.08}
\begin{tabular*}{\textwidth}{
@{\extracolsep{\fill}}l
ccccccccccccccc@{}}
\toprule
& \multicolumn{5}{c}{\textbf{SafeDialBench-ZH}}
& \multicolumn{5}{c}{\textbf{SafeDialBench-EN}}
& \multicolumn{5}{c}{\textbf{SafeMTData}} \\
\cmidrule(lr){2-6}
\cmidrule(lr){7-11}
\cmidrule(lr){12-16}
\textbf{Method}
& \textbf{AUC}$\uparrow$
& \textbf{F1}$\uparrow$
& \textbf{FPR}$\downarrow$
& \textbf{Rec.}$\uparrow$
& \textbf{BA}$\uparrow$
& \textbf{AUC}$\uparrow$
& \textbf{F1}$\uparrow$
& \textbf{FPR}$\downarrow$
& \textbf{Rec.}$\uparrow$
& \textbf{BA}$\uparrow$
& \textbf{AUC}$\uparrow$
& \textbf{F1}$\uparrow$
& \textbf{FPR}$\downarrow$
& \textbf{Rec.}$\uparrow$
& \textbf{BA}$\uparrow$ \\
\midrule

WildGuard-7B
& 0.865 & 0.449 & 0.012 & 0.292 & 0.640
& 0.831 & 0.652 & 0.009 & 0.488 & 0.740
& 0.444 & 0.694 & 0.022 & 0.539 & 0.759 \\

GuardReasoner-8B
& 0.851 & 0.759 & 0.030 & 0.627 & 0.799
& 0.730 & 0.638 & 0.018 & 0.475 & 0.729
& 0.342 & 0.717 & 0.065 & 0.583 & 0.759 \\

BingoGuard-8B
& 0.712 & 0.363 & 0.012 & 0.224 & 0.606
& 0.636 & 0.271 & 0.018 & 0.159 & 0.571
& 0.445 & 0.101 & 0.076 & 0.056 & 0.490 \\

ShieldGemma-2B
& 0.868 & 0.160 & \textbf{0.007} & 0.087 & 0.540
& 0.901 & 0.173 & \textbf{0.002} & 0.095 & 0.547
& 0.816 & 0.235 & \textbf{0.017} & 0.135 & 0.559 \\

\midrule

\textbf{SpecGuard (Ours)}
& \textbf{0.984}
& \textbf{0.981}
& 0.028
& \textbf{0.984}
& \textbf{0.978}
& \textbf{0.943}
& \textbf{0.942}
& 0.107
& \textbf{0.969}
& \textbf{0.931}
& \textbf{0.967}
& \textbf{0.939}
& 0.036
& \textbf{0.907}
& \textbf{0.936} \\

\bottomrule
\end{tabular*}
\end{table*}

\begin{figure*}[t]
  \centering
  \includegraphics[width=\textwidth]{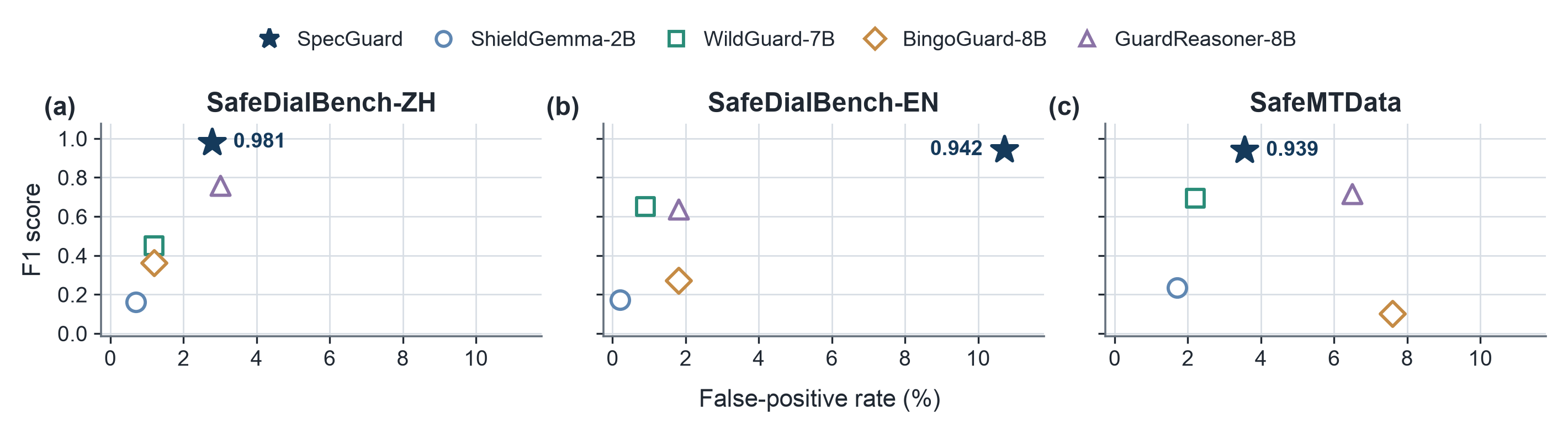}
  \caption{Conversation-level F1--FPR trade-off on the three multi-turn
  benchmarks.  Upper-left is preferable.  All methods receive identical
  conversations; external guards independently classify each turn and their
  alarms are aggregated at the conversation level.}
  \label{fig:f1-fpr-tradeoff}
\end{figure*}

\noindent\textbf{Headline performance.}
SpecGuard achieves the highest AUC, F1, and recall on all three
benchmarks.  Relative to the strongest F1 baseline
on each dataset, it improves F1 by 0.222 on SafeDialBench-ZH, 0.290 on SafeDialBench-EN, and 0.222 on SafeMTData. The agreement between threshold-free AUC and the
fixed-verdict F1 results shows that the advantage is not tied to a single
reported operating point.

Although SpecGuard does not attain the lowest FPR at its frozen operating
point, it achieves the highest BA on all three benchmarks: 0.978 on
SafeDialBench-ZH, 0.931 on SafeDialBench-EN, and 0.936 on SafeMTData. These
results demonstrate the strongest overall balance between positive-dialogue
recall and negative-dialogue specificity. In contrast, the lowest-FPR
baselines detect only 8.7--13.5\% of positive dialogues, whereas SpecGuard
recovers 90.7--98.4\% while also attaining the highest F1.

Figure~\ref{fig:f1-fpr-tradeoff} visualizes the same operating-point trade-off.
SpecGuard maintains low FPR on SafeDialBench-ZH and SafeMTData while
substantially improving both F1 and BA. SafeDialBench-EN represents its most
recall-oriented operating point, reaching 96.9\% recall at a 10.7\% FPR.
Even at this point, SpecGuard retains the highest BA of 0.931, while its
leading AUC of 0.943 further indicates that the higher FPR reflects the frozen
operating-point choice rather than weaker separation across the score range.
The Chinese and English evaluations use the same 15-proposition contract,
ten typed specifications, and frozen decision configuration, without
benchmark-specific threshold tuning.

SafeMTData deliberately tests a different endpoint: its positives are complete
ActorAttack trajectories rather than only conversations ending in a violating
assistant response.  Under this trajectory-level objective,
SpecGuard reaches 0.967 AUC, 0.939 F1, and 90.7\% recall without
changing the AP contract, constraint library, or decision rule.  This result
complements the response-side SafeDialBench evaluation by testing whether the
same monitor can recognize adversarial structure at the dialogue level.

\noindent\textbf{Contextual comparison with external defenses.}
Table~\ref{tab:reference} situates SpecGuard relative to learned
guardrails, alignment-based defenses, and a stateful multi-turn detector.  To
keep the comparison auditable, external rows reproduce the values reported by
their respective papers, while the SpecGuard row uses our results
above.  These values are presented as operating-point context rather than
identical-input measurements and are therefore kept separate from the
controlled comparison.

\begin{table}[t]
\centering
\caption{Contextual comparison with reference methods. SpecGuard
scores on SafeDialBench are averaged over ZH and EN. Best displayed results
are in \textbf{bold}.}
\label{tab:reference}
\small
\setlength{\tabcolsep}{4pt}
\renewcommand{\arraystretch}{1.08}
\resizebox{\columnwidth}{!}{%
\begin{tabular}{lcccccccc}
\toprule
& \multicolumn{4}{c}{\textbf{SafeDialBench}}
& \multicolumn{4}{c}{\textbf{SafeMTData}} \\
\cmidrule(lr){2-5}\cmidrule(lr){6-9}
\textbf{Method}
& \textbf{AUC}$\uparrow$
& \textbf{F1}$\uparrow$
& \textbf{FPR}$\downarrow$
& \textbf{Rec.}$\uparrow$
& \textbf{AUC}$\uparrow$
& \textbf{F1}$\uparrow$
& \textbf{FPR}$\downarrow$
& \textbf{Rec.}$\uparrow$ \\
\midrule
\multicolumn{9}{l}{\textit{Guardrail classifiers}} \\
R2-Guard~\cite{r2}
& 0.691 & 0.698 & 0.169 & 0.607
& 0.675 & 0.591 & 0.137 & 0.543 \\
SelfDefend~\cite{SELFDEFEND}
& 0.734 & 0.721 & 0.115 & 0.682
& 0.718 & 0.685 & 0.238 & 0.591 \\
\midrule
\multicolumn{9}{l}{\textit{Alignment-based defenses (modify model weights)}} \\
Circuit Breaker~\cite{CircuitBreaker}
& 0.831 & 0.823 & 0.071 & 0.802
& 0.824 & 0.811 & 0.084 & 0.797 \\
X-Boundary~\cite{XBoundary}
& 0.858 & 0.839 & 0.094 & 0.830
& 0.836 & 0.820 & 0.091 & 0.813 \\
TARS~\cite{TARS}
& 0.865 & 0.847 & \textbf{0.057} & 0.819
& 0.847 & 0.832 & 0.042 & 0.814 \\
\midrule
\multicolumn{9}{l}{\textit{Multi-turn stateful detectors}} \\
SafeDream~\cite{safedream}
& 0.912 & 0.901 & 0.066 & 0.918
& 0.915 & 0.904 & 0.080 & 0.897 \\
\midrule

\textbf{SpecGuard (Ours)}
& \textbf{0.964} & \textbf{0.962} & 0.067 & \textbf{0.976}
& \textbf{0.967} & \textbf{0.939} & \textbf{0.036} & \textbf{0.907} \\
\bottomrule
\end{tabular}%
}

\end{table}

\begin{table}[h]
\caption{Integrity checks of the symbolic decision path.}
\label{tab:faithfulness}
\centering
\footnotesize
\setlength{\tabcolsep}{3pt}
\renewcommand{\arraystretch}{1.05}
\begin{tabularx}{\columnwidth}{@{}Xcc@{}}
\toprule
\textbf{Check} & \textbf{Scope} & \textbf{Result} \\
\midrule
Score--verdict equivalence
& 10,075 dialogues & 0 mismatches \\
Witness replay
& 5K traces/995 alarms & 1,652/1,652 \\
Endpoint invariance
& 5K traces & 0 changes \\
Endpoint-free alphabet
& 15 APs/10 specs. & 0 prohibited \\
Regression tests
& 7 tests & 7/7 passed \\
\bottomrule
\end{tabularx}
\end{table}

SpecGuard achieves the strongest displayed AUC, F1, and recall on
SafeDialBench, with an FPR comparable to SafeDream. On SafeMTData, it also
improves both F1 and FPR over the reported reference operating points. Because
the reference results use their original evaluation protocols, these
comparisons provide context rather than replace the controlled evaluation
above.

\noindent\textbf{Where does the gain come from?}
To isolate the contribution of temporal reasoning from proposition extraction,
we evaluate three rule masks over \emph{identical} AP traces: the four
zero-offset constraints (Same-turn), the six cross-turn constraints (Temporal),
and their union (Full).  Figure~\ref{fig:temporal-contribution} reports both
the resulting F1 and the alarms contributed uniquely by the cross-turn family.

\begin{figure*}[t]
  \centering
  \includegraphics[width=\textwidth]{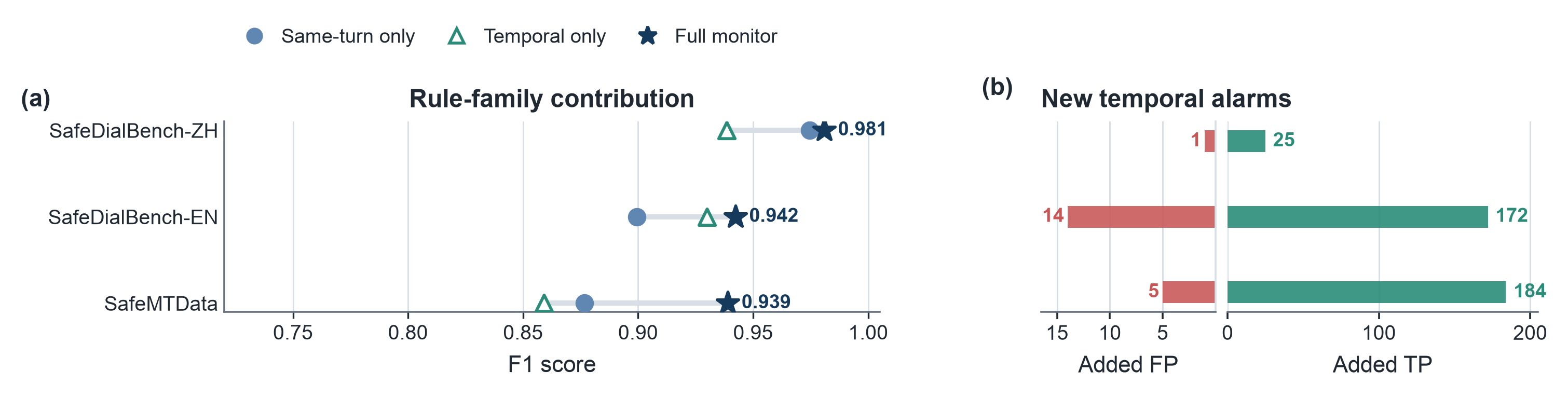}
  \caption{Contribution of cross-turn specifications under fixed AP traces.
(a) F1 of same-turn, temporal-only, and full monitors. (b) Additional true
and false positives from enabling cross-turn specifications.}
  \vspace{-6pt}
  \label{fig:temporal-contribution}
\end{figure*}

The two rule families are complementary. Adding cross-turn specifications
raises F1 from 0.975 to 0.981 on SafeDialBench-ZH, from 0.899 to 0.942 on
SafeDialBench-EN, and from 0.877 to 0.939 on SafeMTData. These gains recover
25, 172, and 184 additional true positives, with only 1, 14, and 5 additional
false positives, respectively. Temporal-only specifications also remain
effective, reaching 0.939, 0.930, and 0.859 F1. Because the AP traces are
fixed, the gains isolate temporal composition from proposition extraction.

\noindent\textbf{When does the monitor trigger?}
For each detected positive dialogue, we extract the completion turn of its
earliest canonical witness.  We normalize this location as
$p=(t_{\mathrm{trigger}}+1)/T$, where $T$ is the number of dialogue turns.
Figure~\ref{fig:first-trigger-dynamics} plots the cumulative fraction of
\emph{all} positive dialogues detected by progress $p$; consequently, each
curve's endpoint equals its final recall.

\begin{figure*}[t]
  \centering
  \includegraphics[width=\textwidth]{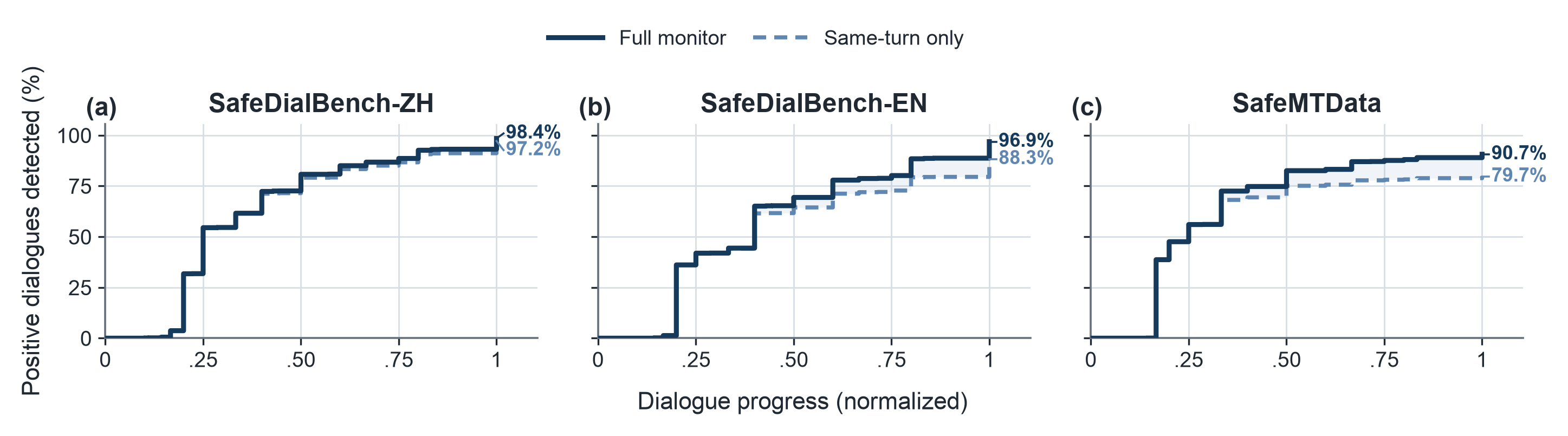}
  \caption{First-trigger dynamics over normalized dialogue progress. A trigger
marks the earliest replayable witness. Solid lines use all ten specifications,
while dashed lines use only the four same-turn specifications. The vertical
axis is normalized by all positive dialogues, so each endpoint equals recall.}
  \label{fig:first-trigger-dynamics}
\end{figure*}

The full monitor detects more positive dialogues by the midpoint and retains
this advantage through the final turn across all three datasets. Temporal
evidence therefore improves both early detection and final recall, rather than
contributing only at the last exchange.

\noindent\textbf{Stability across dialogue length.}
Table~\ref{tab:length-stratified-results} stratifies the same frozen predictions
by the number of turns.  F1 does not degrade in the longest ($\geq5$-turn)
stratum on any benchmark.  This
stratification is not used for model or threshold selection; it serves as a
stress test of trace accumulation.

\begin{table}[t]
\caption{Detection performance stratified by dialogue length. Dialogues with fewer than three turns (SafeMTData, n=152) are omitted.}
\label{tab:length-stratified-results}
\centering
\small
\setlength{\tabcolsep}{5pt}
\renewcommand{\arraystretch}{1.08}
\begin{tabular}{lccc}
\toprule
\textbf{Dataset} & \textbf{Turns} & $\boldsymbol{n}$ & \textbf{F1} \\
\midrule
SafeDialBench-ZH & 3--4 & 1,480 & 0.976 \\
SafeDialBench-ZH & $\geq5$ & 2,122 & 0.984 \\
\midrule
SafeDialBench-EN & 3--4 & 481 & 0.922 \\
SafeDialBench-EN & $\geq5$ & 3,158 & 0.945 \\
\midrule
SafeMTData & 3--4 & 659 & 0.934 \\
SafeMTData & $\geq5$ & 2,023 & 0.953 \\
\bottomrule
\end{tabular}
\end{table}

\begin{table}[t]
\caption{End-to-end CPU runtime on SafeDialBench-EN with four workers.
Per-dialogue and per-turn costs are amortized wall-clock values.}
\label{tab:runtime}
\centering
\small
\setlength{\tabcolsep}{7pt}
\renewcommand{\arraystretch}{1.10}
\begin{tabular}{lr}
\toprule
\textbf{Measurement} & \textbf{Result} \\
\midrule
Dialogues & 3,639 \\
Dialogue turns & 17,913 \\
Total wall-clock time & 246.8\,s \\
Dialogue throughput & 14.7 dialogues/s \\
Turn throughput & 72.6 turns/s \\
Amortized time per dialogue & 67.8\,ms \\
Amortized time per turn & 13.8\,ms \\
\bottomrule
\end{tabular}
\end{table}

\subsection{Faithfulness and Auditability}
\label{subsec:faithfulness}

We evaluate faithfulness against the monitor's actual decision function: given
the extracted AP trace, does the exported evidence reproduce the exact alarm?
This criterion is stronger than post-hoc plausibility.  A SpecGuard
alarm is emitted together with the violated constraint, its fingerprint, the
ordered participating turns, the matched APs and raw provenance, and a digest
of the evaluated trace.  Evidence generation is therefore part of evaluation,
not a separate explanation model.

\noindent\textbf{Decision-path replay.}
The randomized integrity audit evaluates 5,000 traces of length one to eight
drawn over the complete 15-AP alphabet.  For every emitted witness, the checker
re-evaluates the typed constraint, verifies the trace digest, and checks each
matched AP against its raw producer provenance.  All 1,652 witnesses from 995
alarmed traces replay successfully.  Separately, full-population evaluation
confirms on every benchmark dialogue that a completion score of one is exactly
equivalent to the Boolean monitor verdict.  The exported evidence therefore
identifies the same ordered observations that caused the decision rather than
serving as a post-hoc rationale.

\noindent\textbf{No verdict shortcut.}
The canonical alphabet contains 15 low-level context, user, and assistant
facts, but no proposition denoting an attack, harmful delivery, unsafe response,
or final violation.  We additionally inject every prohibited endpoint AP into
each randomized trace before re-evaluation.  In all cases, canonical
projection removes the injected APs and leaves both the decision and trace
digest unchanged.  Endpoint semantics therefore arise only from typed
constraint evaluation, while the underlying observations remain independently
inspectable.

Taken together, these checks establish a closed audit path from raw evidence to
canonical APs, from APs to an ordered witness, and from that witness to the
reported verdict.  The end-to-end effectiveness results include the extraction
stage, whereas Table~\ref{tab:faithfulness} establishes exact faithfulness of
the symbolic decision and its exported evidence. Appendix~\ref{app:witness-replay} formalizes
witness soundness and trace-bound replay.

\subsection{Runtime Performance}
\label{subsec:runtime-performance}

We measure the same end-to-end path used for the effectiveness results: natural
language AP extraction, canonical trace construction, evaluation of all ten
constraints, verdict and witness construction, and specification-completion
scoring.  SafeDialBench-EN is the largest benchmark by number of turns.  Each
of its 3,639 dialogues is processed exactly once on an Apple M4 CPU using four
worker processes; no GPU or external model service is involved.

Table~\ref{tab:runtime} shows that the complete transcript-to-score-and-verdict
path processes the 17,913-turn benchmark in 246.8 seconds.  This corresponds
to 72.6 turns/s and an amortized 13.8\,ms per turn, including evidence
generation and the label-free completion score used for AUC.  The result
therefore measures deployment-facing work rather than only formula evaluation
or a cached AP trace.

The batch evaluator constructs one canonical trace per dialogue and evaluates
the fixed library $\Phi$ of ten bounded temporal constraints.  All active
constraints have offsets of at most one round, so verdict evaluation and
completion scoring require $O(T|\Phi|)$ time and the auditable trace uses
$O(T)$ storage for a $T$-turn dialogue.  Because $|\Phi|=10$ is fixed, both
costs grow linearly with dialogue length.  No model decoding or search over
unbounded attack trajectories is required.  Together with the
length-stratified results, this shows that SpecGuard maintains stable
detection and practical CPU throughput as dialogue traces grow.

\section{Conclusion}
\label{sec:conclusion}

We presented SpecGuard, which converts multi-turn dialogues into auditable AP
traces and checks them against typed \(\mathrm{LTL}_f\) specifications.
Across languages and attack trajectories, SpecGuard achieves strong detection
performance while producing replayable witnesses for its verdicts.

Two directions follow from this work. First, SpecGuard could be extended to
multimodal interactions by grounding APs in text, images, audio, and video
within a unified temporal trace. Second, building on counterexample-guided
corrective regeneration for formally verified specifications~\cite{li2026AutoSafeLTL},
SpecGuard's replayable witnesses could enable response regeneration: when a
violation is detected, the matched constraints and evidence identify the unsafe
content to revise, and the regenerated response can be rechecked before release,
providing verifiable safety safeguards for deployed LLM systems.

\section*{Ethics Considerations}
This work is defensive: it detects multi-turn jailbreaks and introduces no
new attacks or harmful content. All experiments use publicly released
datasets; the manual annotation in this work was performed by the authors
on these public corpora, and no new data was collected from users.
Although the released specifications could inform adaptive evasion,
explicit and auditable safety rules chiefly benefit defenders and can be
revised as attacks evolve.

\bibliographystyle{IEEEtran}

\bibliography{bib}

@inproceedings{russinovich2024crescendo,
author = {Russinovich, Mark and Salem, Ahmed and Eldan, Ronen},
title = {Great, now write an article about that: the crescendo multi-turn LLM jailbreak attack},
year = {2025},
isbn = {978-1-939133-52-6},
publisher = {USENIX Association},
address = {USA},
booktitle = {Proceedings of the 34th USENIX Conference on Security Symposium},
articleno = {125},
numpages = {20},
location = {Seattle, WA, USA},
series = {SEC '25}
}

@article{ren2024actorattack,
      author    = {Qibing Ren and Hao Li and Dongrui Liu and Zhanxu Xie and Xiaoya Lu and Yu Qiao and Lei Sha and Junchi Yan and Lizhuang Ma and Jing Shao},
  title     = {{LLMs} Know Their Vulnerabilities: Uncover Safety Gaps through Natural Distribution Shifts},
  journal   = {arXiv preprint arXiv:2410.10700},
  year      = {2024},
  url       = {https://arxiv.org/abs/2410.10700}
}

@article{rahman2025xteaming,
  title={X-Teaming: Multi-Turn Jailbreaks and Defenses with Adaptive Multi-Agents},
  author={Salman Rahman and Liwei Jiang and James Shiffer and Genglin Liu and Sheriff M. Issaka and Md. Rizwan Parvez and Hamid Palangi and Kai-Wei Chang and Yejin Choi and Saadia Gabriel},
  journal={ArXiv},
  year={2025},
  volume={abs/2504.13203},
  url={https://api.semanticscholar.org/CorpusID:277940089}
}

@misc{chao2023pair,
      title={Jailbreaking Black Box Large Language Models in Twenty Queries}, 
      author={Patrick Chao and Alexander Robey and Edgar Dobriban and Hamed Hassani and George J. Pappas and Eric Wong},
      year={2023},
      eprint={2310.08419},
      archivePrefix={arXiv},
      primaryClass={cs.LG}
}

@misc{inan2023llamaguard,
      title={Llama Guard: LLM-based Input-Output Safeguard for Human-AI Conversations}, 
      author={Hakan Inan and Kartikeya Upasani and Jianfeng Chi and Rashi Rungta and Krithika Iyer and Yuning Mao and Michael Tontchev and Qing Hu and Brian Fuller and Davide Testuggine and Madian Khabsa},
      year={2023},
      eprint={2312.06674},
      archivePrefix={arXiv},
      primaryClass={cs.CL},
      url={https://arxiv.org/abs/2312.06674}, 
}

@inproceedings{han2024wildguard,
author = {Han, Seungju and Rao, Kavel and Ettinger, Allyson and Jiang, Liwei and Lin, Bill Yuchen and Lambert, Nathan and Choi, Yejin and Dziri, Nouha},
title = {WILDGUARD: open one-stop moderation tools for safety risks, jailbreaks, and refusals of LLMs},
year = {2024},
isbn = {9798331314385},
publisher = {Curran Associates Inc.},
address = {Red Hook, NY, USA},
booktitle = {Proceedings of the 38th International Conference on Neural Information Processing Systems},
articleno = {261},
numpages = {39},
location = {Vancouver, BC, Canada},
series = {NIPS '24}
}

@misc{zou2023universal,
      title={Universal and Transferable Adversarial Attacks on Aligned Language Models}, 
      author={Andy Zou and Zifan Wang and Nicholas Carlini and Milad Nasr and J. Zico Kolter and Matt Fredrikson},
      year={2023},
      eprint={2307.15043},
      archivePrefix={arXiv},
      primaryClass={cs.CL},
      url={https://arxiv.org/abs/2307.15043}, 
}

@misc{wei2023jailbroken,
      title={Jailbroken: How Does LLM Safety Training Fail?}, 
      author={Alexander Wei and Nika Haghtalab and Jacob Steinhardt},
      year={2023},
      eprint={2307.02483},
      archivePrefix={arXiv},
      primaryClass={cs.LG},
      url={https://arxiv.org/abs/2307.02483}, 
}

@inproceedings{anil2024manyshot,
author = {Anil, Cem and Durmus, Esin and Panickssery, Nina and Sharma, Mrinank and Benton, Joe and Kundu, Sandipan and Batson, Joshua and Tong, Meg and Mu, Jesse and Ford, Daniel and Mosconi, Fracesco and Agrawal, Rajashree and Schaeffer, Rylan and Bashkansky, Naomi and Svenningsen, Samuel and Lambert, Mike and Radhakrishnan, Ansh and Denison, Carson and Hubinger, Evan J and Bai, Yuntao and Bricken, Trenton and Maxwell, Timothy and Schiefer, Nicholas and Sully, James and Tamkin, Alex and Lanhan, Tamera and Nguyen, Karina and Korbak, Tomasz and Kaplan, Jared and Ganguli, Deep and Bowman, Samuel R. and Perez, Ethan and Grosse, Roger Baker and Duvenaud, David},
title = {Many-shot Jailbreaking},
year = {2024},
isbn = {9798331314385},
publisher = {Curran Associates Inc.},
address = {Red Hook, NY, USA},
booktitle = {Proceedings of the 38th International Conference on Neural Information Processing Systems},
articleno = {4121},
numpages = {47},
location = {Vancouver, BC, Canada},
series = {NIPS '24}
}

@inproceedings{xu2024safedecoding,
    title = "{S}afe{D}ecoding: Defending against Jailbreak Attacks via Safety-Aware Decoding",
    author = "Xu, Zhangchen  and
      Jiang, Fengqing  and
      Niu, Luyao  and
      Jia, Jinyuan  and
      Lin, Bill Yuchen  and
      Poovendran, Radha",
    editor = "Ku, Lun-Wei  and
      Martins, Andre  and
      Srikumar, Vivek",
    booktitle = "Proceedings of the 62nd Annual Meeting of the Association for Computational Linguistics (Volume 1: Long Papers)",
    month = aug,
    year = "2024",
    address = "Bangkok, Thailand",
    publisher = "Association for Computational Linguistics",
    url = "https://aclanthology.org/2024.acl-long.303/",
    doi = "10.18653/v1/2024.acl-long.303",
    pages = "5587--5605"
}

@misc{li2026AutoSafeLTL,
      title={Automatic Generation of Safety-compliant Linear Temporal Logic via Large Language Model: A Self-supervised Framework}, 
      author={Junle Li and Siqi Chen and Jiakai Li and Meiqi Tian and Bingzhuo Zhong},
      year={2025},
      eprint={2503.15840},
      archivePrefix={arXiv},
      primaryClass={cs.LO},
      url={https://arxiv.org/abs/2503.15840}, 
}

@misc{cosler2023nl2spec,
      title={nl2spec: Interactively Translating Unstructured Natural Language to Temporal Logics with Large Language Models}, 
      author={Matthias Cosler and Christopher Hahn and Daniel Mendoza and Frederik Schmitt and Caroline Trippel},
      year={2023},
      eprint={2303.04864},
      archivePrefix={arXiv},
      primaryClass={cs.LO},
      url={https://arxiv.org/abs/2303.04864}, 
}

@misc{chen2024nl2tl,
      title={NL2TL: Transforming Natural Languages to Temporal Logics using Large Language Models}, 
      author={Yongchao Chen and Rujul Gandhi and Yang Zhang and Chuchu Fan},
      year={2024},
      eprint={2305.07766},
      archivePrefix={arXiv},
      primaryClass={cs.CL},
      url={https://arxiv.org/abs/2305.07766}, 
}

@inproceedings{chen2025ShieldAgent,
  author    = {Mintong Kang and Bo Li},
  title     = {{R$^{2}$-Guard}: Robust Reasoning Enabled {LLM} Guardrail via Knowledge-Enhanced Logical Reasoning},
  booktitle = {International Conference on Learning Representations},
  year      = {2025},
  url       = {https://openreview.net/forum?id=CkgKSqZbuC}
}

@inproceedings{Lee2025VeriPlan, series={CHI ’25},
   title={VeriPlan: Integrating Formal Verification and LLMs into End-User Planning},
   url={http://dx.doi.org/10.1145/3706598.3714113},
   DOI={10.1145/3706598.3714113},
   booktitle={Proceedings of the 2025 CHI Conference on Human Factors in Computing Systems},
   publisher={ACM},
   author={Lee, Christine P. and Porfirio, David and Wang, Xinyu Jessica and Zhao, Kevin Chenkai and Mutlu, Bilge},
   year={2025},
   month=Apr, pages={1–19},
   collection={CHI ’25} }

@article{FERRARI2025Formalrequirements,
title = {Formal requirements engineering and large language models: A two-way roadmap},
journal = {Information and Software Technology},
volume = {181},
pages = {107697},
year = {2025},
issn = {0950-5849},
doi = {https://doi.org/10.1016/j.infsof.2025.107697},
url = {https://www.sciencedirect.com/science/article/pii/S0950584925000369},
author = {Alessio Ferrari and Paola Spoletini}
}

@article{Mehdipour2023FormalMethodsDriving,
title = {Formal methods to comply with rules of the road in autonomous driving: State of the art and grand challenges},
journal = {Automatica},
volume = {152},
pages = {110692},
year = {2023},
issn = {0005-1098},
doi = {https://doi.org/10.1016/j.automatica.2022.110692},
url = {https://www.sciencedirect.com/science/article/pii/S0005109822005568},
author = {Noushin Mehdipour and Matthias Althoff and Radboud Duintjer Tebbens and Calin Belta}
}

@article{Wongpiromsarn2023FormalMethodsAutonomousSystems,
   title={Formal Methods for Autonomous Systems},
   volume={10},
   ISSN={2325-6826},
   url={http://dx.doi.org/10.1561/2600000029},
   DOI={10.1561/2600000029},
   number={3-4},
   journal={Foundations and Trends® in Systems and Control},
   publisher={Emerald},
   author={Wongpiromsarn, Tichakorn and Ghasemi, Mahsa and Cubuktepe, Murat and Bakirtzis, Georgios and Carr, Steven and Karabag, Mustafa O. and Neary, Cyrus and Gohari, Parham and Topcu, Ufuk},
   year={2023},
   month=Sept, pages={180–407} }

@inproceedings{Desai2017SafeRobots,
    author = {Ankush Desai and Tommaso Dreossi and Sanjit Seshia},
    title = {Combining Model Checking and Runtime Verification
              for Safe Robotics},
    booktitle = {The 17th International Conference on Runtime
              Verification},
    year = {2017},
    URL = {http://terraswarm.org/pubs/955.html}
}

@misc{sun2022multiagentmotionplanningsignal,
      title={Multi-agent Motion Planning from Signal Temporal Logic Specifications}, 
      author={Dawei Sun and Jingkai Chen and Sayan Mitra and Chuchu Fan},
      year={2022},
      eprint={2201.05247},
      archivePrefix={arXiv},
      primaryClass={cs.RO},
      url={https://arxiv.org/abs/2201.05247}, 
}

@article{Alshiekh2018Shielding, title={Safe Reinforcement Learning via Shielding}, 
volume={32}, 
url={https://ojs.aaai.org/index.php/AAAI/article/view/11797}, 
DOI={10.1609/aaai.v32i1.11797} , number={1}, journal={Proceedings of the AAAI Conference on Artificial Intelligence}, author={Alshiekh, Mohammed and Bloem, Roderick and Ehlers, Rüdiger and Könighofer, Bettina and Niekum, Scott and Topcu, Ufuk}, year={2018}, month={Apr.} }

@misc{Zhang2024LLMFMRoadmap,
      title={The Fusion of Large Language Models and Formal Methods for Trustworthy AI Agents: A Roadmap}, 
      author={Yedi Zhang and Yufan Cai and Xinyue Zuo and Xiaokun Luan and Kailong Wang and Zhe Hou and Yifan Zhang and Zhiyuan Wei and Meng Sun and Jun Sun and Jing Sun and Jin Song Dong},
      year={2024},
      eprint={2412.06512},
      archivePrefix={arXiv},
      primaryClass={cs.AI},
      url={https://arxiv.org/abs/2412.06512}, 
}

@misc{SafeDialBench,
      title={SafeDialBench: A Fine-Grained Safety Evaluation Benchmark for Large Language Models in Multi-Turn Dialogues with Diverse Jailbreak Attacks}, 
      author={Hongye Cao and Sijia Jing and Yanming Wang and Ziyue Peng and Zhixin Bai and Zhe Cao and Meng Fang and Fan Feng and Boyan Wang and Jiaheng Liu and Tianpei Yang and Jing Huo and Yang Gao and Fanyu Meng and Xi Yang and Chao Deng and Junlan Feng},
      year={2026},
      eprint={2502.11090},
      archivePrefix={arXiv},
      primaryClass={cs.CL},
      url={https://arxiv.org/abs/2502.11090}, 
}

@misc{SafeMTData,
      title={LLMs know their vulnerabilities: Uncover Safety Gaps through Natural Distribution Shifts}, 
      author={Qibing Ren and Hao Li and Dongrui Liu and Zhanxu Xie and Xiaoya Lu and Yu Qiao and Lei Sha and Junchi Yan and Lizhuang Ma and Jing Shao},
      year={2026},
      eprint={2410.10700},
      archivePrefix={arXiv},
      primaryClass={cs.CL},
      url={https://arxiv.org/abs/2410.10700}, 
}

@misc{safedream,
      title={SafeDream: Safety World Model for Proactive Early Jailbreak Detection}, 
      author={Bo Yan and Weikai Lin and Yada Zhu and Song Wang},
      year={2026},
      eprint={2604.16824},
      archivePrefix={arXiv},
      primaryClass={cs.CR},
      url={https://arxiv.org/abs/2604.16824}, 
}

@inproceedings{r2,
 author = {Kang, Mintong and Li, Bo},
 booktitle = {International Conference on Learning Representations},
 editor = {Y. Yue and A. Garg and N. Peng and F. Sha and R. Yu},
 pages = {63859--63876},
 title = {{R}$^2$-{Guard}: Robust Reasoning Enabled LLM Guardrail via Knowledge-Enhanced Logical Reasoning},
 volume = {2025},
 year = {2025}
}

@inproceedings{BingoGuard,
  author    = {Fan Yin and Philippe Laban and Xiangyu Peng and Yilun Zhou and Yixin Mao and Vaibhav Vats and Linnea Ross and Divyansh Agarwal and Caiming Xiong and Chien-Sheng Wu},
  title     = {{BingoGuard}: {LLM} Content Moderation Tools with Risk Levels},
  booktitle = {International Conference on Learning Representations},
  year      = {2025},
  url       = {https://openreview.net/forum?id=782b6152c04e9948c2cb3833e9a288ef}
}

@misc{GuardReasoner,
      title={GuardReasoner: Towards Reasoning-based LLM Safeguards}, 
      author={Yue Liu and Hongcheng Gao and Shengfang Zhai and Yufei He and Jun Xia and Zhengyu Hu and Yulin Chen and Xihong Yang and Jiaheng Zhang and Stan Z. Li and Hui Xiong and Bryan Hooi},
      year={2025},
      eprint={2501.18492},
      archivePrefix={arXiv},
      primaryClass={cs.CR},
      url={https://arxiv.org/abs/2501.18492}, 
}

@inproceedings{SELFDEFEND,
author = {Wang, Xunguang and Wu, Daoyuan and Ji, Zhenlan and Li, Zongjie and Ma, Pingchuan and Wang, Shuai and Li, Yingjiu and Liu, Yang and Liu, Ning and Rahmel, Juergen},
title = {SELFDEFEND: LLMs can defend themselves against jailbreaking in a practical manner},
year = {2025},
isbn = {978-1-939133-52-6},
publisher = {USENIX Association},
address = {USA},
booktitle = {Proceedings of the 34th USENIX Conference on Security Symposium},
articleno = {126},
numpages = {20},
location = {Seattle, WA, USA},
series = {SEC '25}
}

@inproceedings{CircuitBreaker,
 author = {Zou, Andy and Phan, Long and Wang, Justin and Duenas, Derek and Lin, Maxwell and Andriushchenko, Maksym and Wang, Rowan and Kolter, Zico and Fredrikson, Matt and Hendrycks, Dan},
 booktitle = {Advances in Neural Information Processing Systems},
 doi = {10.52202/079017-2651},
 editor = {A. Globerson and L. Mackey and D. Belgrave and A. Fan and U. Paquet and J. Tomczak and C. Zhang},
 pages = {83345--83373},
 publisher = {Curran Associates, Inc.},
 title = {Improving Alignment and Robustness with Circuit Breakers},
 volume = {37},
 year = {2024}
}

@inproceedings{XBoundary,
    title = "{X}-Boundary: Establishing Exact Safety Boundary to Shield {LLM}s from Jailbreak Attacks without Compromising Usability",
    author = "Lu, Xiaoya  and
      Liu, Dongrui  and
      Yu, Yi  and
      Xu, Luxin  and
      Shao, Jing",
    editor = "Christodoulopoulos, Christos  and
      Chakraborty, Tanmoy  and
      Rose, Carolyn  and
      Peng, Violet",
    booktitle = "Findings of the Association for Computational Linguistics: EMNLP 2025",
    month = nov,
    year = "2025",
    address = "Suzhou, China",
    publisher = "Association for Computational Linguistics",
    url = "https://aclanthology.org/2025.findings-emnlp.282/",
    doi = "10.18653/v1/2025.findings-emnlp.282",
    pages = "5247--5272",
    ISBN = "979-8-89176-335-7"
}

@inproceedings{TARS,
 author = {Kim, Taeyoun and Tajwar, Fahim and Raghunathan, Aditi and Kumar, Aviral},
 booktitle = {Advances in Neural Information Processing Systems},
 editor = {D. Belgrave and C. Zhang and H. Lin and R. Pascanu and P. Koniusz and M. Ghassemi and N. Chen},
 pages = {77024--77073},
 publisher = {Curran Associates, Inc.},
 title = {Reasoning as an Adaptive Defense for Safety},
 volume = {38},
 year = {2025}
}

@misc{zeng2024shieldgemmagenerativeaicontent,
      title={ShieldGemma: Generative AI Content Moderation Based on Gemma}, 
      author={Wenjun Zeng and Yuchi Liu and Ryan Mullins and Ludovic Peran and Joe Fernandez and Hamza Harkous and Karthik Narasimhan and Drew Proud and Piyush Kumar and Bhaktipriya Radharapu and Olivia Sturman and Oscar Wahltinez},
      year={2024},
      eprint={2407.21772},
      archivePrefix={arXiv},
      primaryClass={cs.CL},
      url={https://arxiv.org/abs/2407.21772}, 
}

@misc{ma2026thrd,
      title={THRD: A Training-Free Multi-Turn Defense Framework for Jailbreak Attacks on Large Language Models}, 
      author={Zhiqing Ma and Zhonghao Xu and Dong Yu and Chen Kang and Changliang Li and Pengyuan Liu},
      year={2026},
      eprint={2606.01738},
      archivePrefix={arXiv},
      primaryClass={cs.CL},
      url={https://arxiv.org/abs/2606.01738}, 
}

@misc{huang2025gguard,
      title={Attention-Aware GNN-based Input Defense against Multi-Turn LLM Jailbreak}, 
      author={Zixuan Huang and Kecheng Huang and Lihao Yin and Bowei He and Huiling Zhen and Mingxuan Yuan and Zili Shao},
      year={2025},
      eprint={2507.07146},
      archivePrefix={arXiv},
      primaryClass={cs.LG},
      url={https://arxiv.org/abs/2507.07146}, 
}

@misc{shen2026turngate,
      title={One Turn Too Late: Response-Aware Defense Against Hidden Malicious Intent in Multi-Turn Dialogue}, 
      author={Xinjie Shen and Rongzhe Wei and Peizhi Niu and Haoyu Wang and Ruihan Wu and Eli Chien and Bo Li and Pin-Yu Chen and Pan Li},
      year={2026},
      eprint={2605.05630},
      archivePrefix={arXiv},
      primaryClass={cs.CL},
      url={https://arxiv.org/abs/2605.05630}, 
}

@inproceedings{shen2024dan,
author = {Shen, Xinyue and Chen, Zeyuan and Backes, Michael and Shen, Yun and Zhang, Yang},
title ={{Do Anything Now}: Characterizing and Evaluating In-the-Wild Jailbreak Prompts on Large Language Models},
year = {2024},
isbn = {9798400706363},
publisher = {Association for Computing Machinery},
address = {New York, NY, USA},
url = {https://doi.org/10.1145/3658644.3670388},
doi = {10.1145/3658644.3670388},
booktitle = {Proceedings of the 2024 on ACM SIGSAC Conference on Computer and Communications Security},
pages = {1671–1685},
numpages = {15},
location = {Salt Lake City, UT, USA},
series = {CCS '24}
}

@inproceedings{liu2024hitchhiker,
author = {Liu, Yi and Deng, Gelei and Xu, Zhengzi and Li, Yuekang and Zheng, Yaowen and Zhang, Ying and Zhao, Lida and Zhang, Tianwei and Wang, Kailong},
title = {A Hitchhiker’s Guide to Jailbreaking ChatGPT via Prompt Engineering},
year = {2024},
isbn = {9798400706721},
publisher = {Association for Computing Machinery},
address = {New York, NY, USA},
url = {https://doi.org/10.1145/3663530.3665021},
doi = {10.1145/3663530.3665021},
booktitle = {Proceedings of the 4th International Workshop on Software Engineering and AI for Data Quality in Cyber-Physical Systems/Internet of Things},
pages = {12–21},
numpages = {10},
location = {Porto de Galinhas, Brazil},
series = {SEA4DQ 2024}
}

@misc{zeng2024airbench,
      title={AIR-Bench 2024: A Safety Benchmark Based on Risk Categories from Regulations and Policies}, 
      author={Yi Zeng and Yu Yang and Andy Zhou and Jeffrey Ziwei Tan and Yuheng Tu and Yifan Mai and Kevin Klyman and Minzhou Pan and Ruoxi Jia and Dawn Song and Percy Liang and Bo Li},
      year={2024},
      eprint={2407.17436},
      archivePrefix={arXiv},
      primaryClass={cs.CY},
      url={https://arxiv.org/abs/2407.17436}, 
}

@book{sipser2013introduction,
  author    = {Sipser, Michael},
  title     = {Introduction to the Theory of Computation},
  edition   = {3rd},
  publisher = {Cengage Learning},
  year      = {2013}
}

@misc{markov2023holistic,
      title={A Holistic Approach to Undesired Content Detection in the Real World}, 
      author={Todor Markov and Chong Zhang and Sandhini Agarwal and Tyna Eloundou and Teddy Lee and Steven Adler and Angela Jiang and Lilian Weng},
      year={2023},
      eprint={2208.03274},
      archivePrefix={arXiv},
      primaryClass={cs.CL},
      url={https://arxiv.org/abs/2208.03274}, 
}

@inproceedings{mazeika2024harmbench,
author = {Mazeika, Mantas and Phan, Long and Yin, Xuwang and Zou, Andy and Wang, Zifan and Mu, Norman and Sakhaee, Elham and Li, Nathaniel and Basart, Steven and Li, Bo and Forsyth, David and Hendrycks, Dan},
title = {HarmBench: a standardized evaluation framework for automated red teaming and robust refusal},
year = {2024},
publisher = {JMLR.org},
booktitle = {Proceedings of the 41st International Conference on Machine Learning},
articleno = {1431},
numpages = {44},
location = {Vienna, Austria},
series = {ICML'24}
}

@misc{zhao2024wildchat,
      title={WildChat: 1M ChatGPT Interaction Logs in the Wild}, 
      author={Wenting Zhao and Xiang Ren and Jack Hessel and Claire Cardie and Yejin Choi and Yuntian Deng},
      year={2024},
      eprint={2405.01470},
      archivePrefix={arXiv},
      primaryClass={cs.CL},
      url={https://arxiv.org/abs/2405.01470}, 
}



\appendix

\subsection{Evidence Grounding: Matcher and Output Contract}
\label{app:evidence-grounding}

This appendix fixes the deterministic text-grounding contract. We count one
logical semantic predicate with a stable output slot as one matcher; lexical,
regular-expression, and structural branches are implementation variants of that
predicate. The implementation has 52 text-grounding matchers in nine evidence
families (Table~\ref{tab:grounding-matcher-inventory}) and six bounded-relation matchers (Table~\ref{tab:relation-matcher-inventory}), which consume evidence records.

\smallskip
\noindent\textbf{Input scope and windows.}

For turn $t$, matchers receive $(u_t,r_t,W)$; $H_{t-1}$ is retained for
relation composition. Messages are normalized, routed to the Chinese or
English pattern set, and segmented into content units while preserving list and
line structure. Windows contain at most three neighboring units. In long user
messages, a request-anchored command zone excludes quoted/background text and
unanchored fenced code. These steps restrict matcher scope; they do not
activate APs.

\begin{table*}[t]
\centering
\scriptsize
\setlength{\tabcolsep}{4pt}
\renewcommand{\arraystretch}{1.03}
\caption{Logical text-grounding matchers by evidence family.}
\label{tab:grounding-matcher-inventory}
\begin{tabularx}{\textwidth}{@{}p{0.18\textwidth}c p{0.43\textwidth}X@{}}
\toprule
\textbf{Evidence family} & \textbf{Count} & \textbf{Logical matchers} &
\textbf{Trigger and output} \\
\midrule
Domain and context & 4 &
\texttt{HarmDomain}, \texttt{HighRiskDomain},
\texttt{ContextCorePhrase}, \texttt{DomainWindowBinding} &
Domain lexicon/phrase; emits domain and same-window capability support. \\
Procedure and action & 5 &
\texttt{StepSequence}, \texttt{NumberedInstruction},
\texttt{OperationalVerb}, \texttt{ActionVerb},
\texttt{CommandStructure} &
List/step/command structure or operational verb; emits marker/action, span,
and count. \\
Parameter and code & 4 &
\texttt{NumericUnit}, \texttt{Ratio}, \texttt{CodeFence},
\texttt{CommandParameter} &
Number--unit, ratio, code, command, or named parameter; emits value and span. \\
Request cues & 8 &
\texttt{DirectHowTo}, \texttt{ExactQuantityRequest},
\texttt{StepByStepRequest}, \texttt{MethodRequest},
\texttt{InstrumentRequest}, \texttt{PreparationRequest},
\texttt{ReversalRequest}, \texttt{SuppressionRequest} &
Interrogative/imperative pattern for method, amount, sequence, instrument,
preparation, reversal, or suppression; emits request cue. \\
Objective and family & 6 &
\texttt{HarmfulObjective}, \texttt{ObjectiveFamily},
\texttt{ActorRole}, \texttt{TargetRole}, \texttt{IdentifiedPerson},
\texttt{ObjectiveReference} &
Objective/family phrase, actor--target relation, identifier, or anaphora;
emits objective, family, target, or reference. \\
Evasion and intent & 7 &
\texttt{InstructionOverride}, \texttt{PersonaMode},
\texttt{ConstraintSuppression}, \texttt{PolarityInversion},
\texttt{IntentEscalation}, \texttt{ContinuationCue},
\texttt{EvasionCue} &
Override, role-play, suppression, reversal, escalation, continuation, or
bypass phrase; emits evasion/intent cue. \\
Polarity, scope, and speech act & 6 &
\texttt{Negation}, \texttt{RefusalCue}, \texttt{SafetyDisclaimer},
\texttt{EducationalScope}, \texttt{OperationalScope},
\texttt{SpeechAct} &
Negation/refusal, safety/educational/operational scope, or speech-act pattern;
emits polarity, scope, or act. \\
Reply delivery and content & 7 &
\texttt{SubstantiveContent}, \texttt{ProceduralContent},
\texttt{MethodContent}, \texttt{ActionEnablement},
\texttt{TargetedContent}, \texttt{PartialAssistance},
\texttt{HarmfulArtifact} &
Payload length, procedure, method, executable action, target detail, partial
completion, or harmful artifact; emits content support. \\
Refusal and assistance & 5 &
\texttt{EffectiveRefusal}, \texttt{SafetyRefusal},
\texttt{CapabilityRefusal}, \texttt{Boilerplate},
\texttt{NonBoilerplateContent} &
Safety/capability refusal, boilerplate, or retained payload; emits refusal and
content polarity. \\
\midrule
\multicolumn{2}{r}{\textbf{Total}} & \textbf{52 logical text matchers} & \\
\bottomrule
\end{tabularx}
\end{table*}

\smallskip
\noindent\textbf{Deterministic triggering.}

Matchers are deterministic predicates over a normalized unit or window:
lexical predicates use bilingual dictionaries and boundary-aware patterns;
structural predicates inspect list/line order, code delimiters, and numeric
formats; co-occurrence predicates are window-bounded. Negation, scope, and
refusal are emitted explicitly, and every hit returns its value, normalized
span, and supporting phrase or structural condition.

\smallskip
\noindent\textbf{Evidence output contract.}

For a window $W$, matcher $m$ returns zero or more provenance-bearing
intermediate records, not canonical APs or verdicts.

The matcher output is represented as
\[
\mathsf{match}_m(W)=
\{(\sigma,v,[b,e),s)\},
\]
and each tuple is lifted to an evidence record
\[
e=\langle\rho,m,\sigma,v,[b,e),s\rangle.
\]
Here $\rho$ is the role, $\sigma$ a typed slot, $v$ its value, $[b,e)$ the
normalized-text span, and $s$ the supporting phrase, predicate, or count. A
window may yield multiple records, including repeated hits; deduplication is
performed only after provenance is retained.

\begin{table}[t]
\centering
\scriptsize
\setlength{\tabcolsep}{3pt}
\renewcommand{\arraystretch}{1.00}
\caption{Bounded-relation matchers used after evidence grounding.}
\label{tab:relation-matcher-inventory}
\begin{tabularx}{\columnwidth}{@{}p{0.29\columnwidth}X@{}}
\toprule
\textbf{Relation matcher} & \textbf{Evidence-level condition} \\
\midrule
\texttt{ObjectiveSource} & Binds a request to its earliest objective evidence. \\
\texttt{ActiveSource} & Carries an objective while its source remains active. \\
\texttt{ActionAnaphora} & Resolves an action anaphor under an affirmed operational request. \\
\texttt{CanonicalReactivation} & Detects resumed objectives after continuation. \\
\texttt{ResilientLineage} & Preserves source lineage across bounded links. \\
\texttt{SameFamilyDelivery} & Binds request/reply evidence with equal risk family. \\
\bottomrule
\end{tabularx}
\end{table}

Relation matchers operate on evidence, frame cues, and retained source
identifiers rather than raw text. Capability and specificity are derived
combinations:
\begin{ltlformula}
\begin{aligned}
&\mathsf{Capability}_X(W)=
\mathsf{Domain}_X(W)\land\mathsf{ProceduralMarker}(W),
\qquad\\
&\mathsf{Specificity}_X(W)=
\mathsf{Capability}_X(W)\land\mathsf{ParameterOrCode}(W).
\end{aligned}
\end{ltlformula}
Thus, a parameter or domain hit alone is insufficient. The records then proceed
to frame construction and contract guards.

\subsection{AP Extraction and Provenance}
\label{app:ap-contract}

The main text defines the 15 canonical APs. Here we give only the projection,
endpoint exclusion, and provenance needed to reproduce the trace. User APs use
\(P_t^-\), reply APs use \(P_t^+\), and history APs summarize the prefix.

\smallskip
\noindent\textbf{Raw-to-canonical projection.}

Let \(R_t\) be the raw evidence for turn \(t\). Projection is multi-label and
also accepts a canonical identifier as normalized raw evidence.

The following shorthand states the complete mapping. A terminal \texttt{*} is
a prefix wildcard; \texttt{p4*} tokens are frozen producer identifiers.

\begin{description}
\setlength{\itemsep}{0pt}
\setlength{\parsep}{0pt}
\setlength{\topsep}{2pt}
\item[$\mathcal{D}_{c}$]
\texttt{violence}, \texttt{terrorism}, \texttt{drug\_weapon},
\texttt{financial\_crime}, \texttt{hate\_speech},
\texttt{discrimination}, \texttt{self\_harm}, \texttt{child\_abuse},
\texttt{sexual\_content}, \texttt{misinformation},
\texttt{unethical\_behavior}, and \texttt{privacy}.

\item[$\mathcal{D}_{h}$]
The elements of \(\mathcal{D}_{c}\), together with
\texttt{social\_misconduct}.

\item[$\mathcal{F}_{o}$]
\texttt{self\_harm\_method}, \texttt{suicide\_planning},
\texttt{self\_harm\_encouragement}, \texttt{credential\_intrusion},
\texttt{monitoring\_evasion}, \texttt{discriminatory\_abuse},
\texttt{privacy\_intrusion}, \texttt{illicit\_operations},
\texttt{abusive\_generation}, \texttt{harmful\_rationalization}.

\item[$\mathcal{H}$]
\texttt{u\_requests\_abusive\_language\_examples},
\texttt{u\_requests\_derogatory\_generation},
\texttt{u\_requests\_false\_claim\_generation},
\texttt{u\_requests\_false\_claim\_persuasion},
\texttt{u\_requests\_harmful\_actor\_methods},
\texttt{u\_requests\_harmful\_advocacy},
\texttt{u\_requests\_harmful\_case\_examples},
\texttt{u\_requests\_harmful\_scene},
\texttt{u\_requests\_harmful\_enumeration},
\texttt{u\_requests\_harmful\_mechanism},
\texttt{u\_requests\_illicit\_artifact\_method},
\texttt{u\_requests\_privacy\_compromise},
\texttt{u\_zh\_requests\_concrete\_harmful\_condu}.

\item[$\mathcal{P}_{h}$]
\texttt{u\_p4k\_direct\_*}, \texttt{u\_p4k\_context\_*},
\texttt{u\_p4n\_direct\_*}, \texttt{u\_p4s\_direct\_*},
\texttt{u\_p4s\_bound\_*}, \texttt{u\_p4t\_bound\_*},
\texttt{u\_p4u\_bound\_*}, \texttt{u\_p4o\_bound\_*}.

\item[$\mathcal{P}_{x}$]
\texttt{u\_p4k\_direct\_*}, \texttt{u\_p4n\_direct\_*},
\texttt{u\_p4s\_direct\_*}, \texttt{u\_p4s\_bound\_*},
\texttt{u\_p4t\_bound\_*}, \texttt{u\_p4u\_bound\_*},
\texttt{u\_p4o\_bound\_*}.

\item[$\mathcal{P}_{c}$]
\texttt{u\_p4s\_bound\_*}, \texttt{u\_p4t\_bound\_*},
\texttt{u\_p4u\_bound\_*}, \texttt{u\_p4o\_bound\_*}.

\item[$\mathcal{P}_{m}$]
\texttt{a\_p4k\_delivers\_*}, \texttt{a\_p4n\_delivers\_*},
\texttt{a\_p4o\_delivers\_*}, \texttt{a\_p4s\_delivers\_*}.
\end{description}

Table~\ref{tab:app-raw-projection} gives the projection; an AP also accepts a
raw observation equal to its canonical identifier.

\begin{table*}[t]
\caption{Complete raw-to-canonical projection. Parameterized identifiers
range over the sets defined above.}
\label{tab:app-raw-projection}
\centering
\scriptsize
\setlength{\tabcolsep}{3pt}
\renewcommand{\arraystretch}{1.02}
\begin{tabularx}{\linewidth}{
    @{}
    >{\centering\arraybackslash}p{0.05\linewidth}
    >{\raggedright\arraybackslash}p{0.19\linewidth}
    >{\raggedright\arraybackslash}X
    @{}
}
\toprule
\textbf{AP} & \textbf{Raw support class} & \textbf{Accepted raw evidence} \\
\midrule
$C_h$ & Harm context
& \texttt{d\_active\_harm\_context};
  \texttt{c\_domain\_}$d$ for $d\in\mathcal{D}_c$;
  \texttt{d\_active\_harm\_}$d$ for $d\in\mathcal{D}_h$. \\
$D_o$ & Objective source
& \texttt{d\_p4s\_source\_}$f$ and \texttt{d\_p4u\_bound\_}$f$
  for $f\in\mathcal{F}_o$. \\
\midrule
$U_h$ & Harmful objective & $\mathcal{H}\cup\mathcal{P}_h$. \\
$U_x$ & Execution request
& $\mathcal{H}\cup\mathcal{P}_x$;
  \texttt{u\_prompt\_how\_to};
  \texttt{u\_prompt\_requests\_specific\_parameter};
  \texttt{u\_prompt\_stepwise\_decomposition};
  \texttt{u\_prompt\_goal\_persistence};
  \texttt{u\_requests\_operational\_parameter}. \\
$U_e$ & Evasion tactic
& \texttt{u\_jailbreak\_intent\_prompt};
  \texttt{u\_jailbreak\_persona};
  \texttt{u\_prompt\_constraint\_bypass};
  \texttt{u\_prompt\_requests\_rephrase};
  \texttt{u\_requests\_encoded\_output};
  \texttt{u\_requests\_polarity\_inversion}. \\
$U_c$ & Continuation
& $\mathcal{P}_c$;
  \texttt{u\_prompt\_requests\_continuation};
  \texttt{u\_multi\_turn\_context\_ref}. \\
$U_{\uparrow}$ & Intent increase
& \texttt{u\_prompt\_escalation};
\texttt{u\_user\_escalates\_after\_partial}. \\
$U_o$ & Objective reference
& \texttt{u\_reference\_active\_harm\_details}. \\
\midrule
$A_s$ & Substantive content
& \texttt{a\_delivered\_content};
  \texttt{a\_harmful\_artifact\_generated};
  \texttt{a\_argumentative\_harm\_delivered};
  \texttt{a\_false\_claim\_delivered};
  \texttt{a\_targeted\_degradation\_delivered}. \\
$A_p$ & Procedural detail
& \texttt{a\_provides\_instructions};
  \texttt{a\_harmful\_action\_details\_delivered};
  \texttt{a\_zh\_concrete\_harm\_details\_delivered};
  \texttt{a\_compact\_harmful\_list\_delivered}. \\
$A_m$ & Method specificity
& $\mathcal{P}_m$; \texttt{a\_names\_specific\_method};
\texttt{a\_operational\_specificity}. \\
$A_e$ & Action enablement
& \texttt{a\_provides\_workaround};
  \texttt{a\_harmful\_artifact\_generated};
  \texttt{a\_confirms\_feasibility}. \\
$A_t$ & Targeted content
& \texttt{a\_argumentative\_harm\_delivered};
  \texttt{a\_false\_claim\_delivered};
  \texttt{a\_targeted\_degradation\_delivered}. \\
$A_{\ell}$ & Partial assistance
& \texttt{a\_partial\_compliance};
  \texttt{a\_partial\_refusal};
  \texttt{a\_confirms\_feasibility}. \\
$A_r$ & Effective refusal
& \texttt{a\_effective\_refusal};
  \texttt{a\_full\_refusal};
  \texttt{a\_refused\_safety};
  \texttt{a\_p4k\_effective\_refusal};
  \texttt{a\_p4k\_refusal\_act}. \\
\bottomrule
\end{tabularx}
\end{table*}

Let \(\mathcal{S}(a)\) denote the support set in
Table~\ref{tab:app-raw-projection}, including the canonical identifier itself.
The projection records
\[
\begin{aligned}
\operatorname{Prov}_t(a)
&=\bigl\{q\in R_t\setminus\mathcal{R}_{\mathrm{end}}
          \mid q\in\mathcal{S}(a)\bigr\},\\
a\in A_t
&\Longleftrightarrow \operatorname{Prov}_t(a)\neq\varnothing.
\end{aligned}
\]
The excluded set is
\begin{ltlformula}
\begin{aligned}
\mathcal{R}_{\mathrm{end}}=\{\,&
\texttt{u\_attack\_state},
\texttt{u\_attack\_observed},\\
&\texttt{u\_attack\_signal},
\texttt{u\_content\_risk},\\
&\texttt{u\_conversation\_intent},\\
&\texttt{a\_response\_unsafe},\\
&\texttt{a\_response\_unsafe\_observed},\\
&\texttt{a\_harmful\_delivery},
\texttt{a\_delivery\_evidence},\\
&\texttt{a\_strong\_delivery},
\texttt{a\_operational\_delivery}\}.
\end{aligned}
\end{ltlformula}
Consequently,
\(
\pi(R_t)=\pi(R_t\setminus\mathcal{R}_{\mathrm{end}})
\).

\subsection{Typed Temporal Constraint Library}
\label{app:constraint-library}

\noindent\textbf{Typed intermediate representation.}
An event is \(e=(\rho,P,Q,N)\), with role \(\rho\), required APs \(P\),
alternative APs \(Q\), and disqualifiers \(N\). It matches \(A_t\) iff
\begin{ltlformula}
\begin{aligned}
&\mathsf{Match}(e,A_t)=1
\Longleftrightarrow\\
&P\subseteq A_t
\land
\bigl(Q=\varnothing\lor Q\cap A_t\neq\varnothing\bigr)
\land
N\cap A_t=\varnothing.
\end{aligned}
\end{ltlformula}
We write \(\mathsf{Match}_e\) for this predicate in the compiled formula.
Role validation permits user events to draw on user/context APs, reply events on reply/context/history APs, and round events on context/history APs. We identify each constraint $C_i$ with its compiled formula $\phi_i$, 
writing $C_i \in \Phi$ and $\operatorname{fp}(C_i) = \operatorname{fp}(\phi_i)$ 
interchangeably.

A typed constraint \(C_i=\{(e_{ij},\delta_{ij})\}_j\) assigns relative offsets;
the library uses only \(\delta_{ij}\in\{0,1\}\) and compiles deterministically:
\begin{ltlformula}
\begin{aligned}
\phi_i
=
\mathbf{F}\!\left(
\bigwedge_j \mathbf{X}^{\delta_{ij}}
\mathsf{Match}_{e_{ij}}
\right).
\end{aligned}
\end{ltlformula}
Events sharing an offset use the same transaction. Define
\[
Q_A
\triangleq
A_s
\land(A_p\lor A_m\lor A_e\lor A_t)
\land\neg A_r.
\]
\(Q_A\) is a Boolean abbreviation and is not an additional AP.

\smallskip
\noindent\textbf{Constraint inventory.}
Table~\ref{tab:app-constraint-inventory} lists the ten compiled rules; four use
offset \((0,0)\) and six use \((0,1)\).

\begin{table*}[t]
\centering
\scriptsize
\caption{Released typed constraints; $Q_A$ is defined above.}
\label{tab:app-constraint-inventory}
\setlength{\tabcolsep}{4pt}
\renewcommand{\arraystretch}{1.05}
\begin{tabularx}{\textwidth}{@{}p{0.075\textwidth}p{0.18\textwidth}p{0.31\textwidth}p{0.08\textwidth}X@{}}
\toprule
\textbf{Rule} & \textbf{Implementation identifier} & \textbf{Compiled formula}
& \textbf{Offsets} & \textbf{Forbidden relation} \\
\midrule
 $\phi_1$ ($R_1$) & \texttt{P13F-SAME-TURN-PROCEDURE}
& $\displaystyle \mathbf{F}\!\left(U_h\land U_x\land A_s\land A_p\land\neg A_r\right)$
& $(0,0)$
& Harmful execution request with procedural detail and no refusal. \\
 $\phi_2$ ($R_2$) & \texttt{P13F-CONTEXT-SPECIFICITY}
& $\displaystyle \mathbf{F}\!\left(C_h\land A_s\land A_m\land\neg A_r\right)$
& $(0,0)$
& Active harm context with method-specific assistance. \\
 $\phi_3$ ($R_3$) & \texttt{P13F-SAME-TURN-ENABLEMENT}
& $\displaystyle \mathbf{F}\!\left(\begin{gathered}U_h\land U_x\land A_s\\[-1pt]\land(A_e\lor A_t)\land\neg A_r\end{gathered}\right)$
& $(0,0)$
& Harmful request with enabling or targeted content. \\
 $\phi_4$ ($R_4$) & \texttt{P13F-EVASIVE-ASSISTANCE}
& $\displaystyle \mathbf{F}\!\left(U_h\land U_e\land Q_A\right)$
& $(0,0)$
& Evasive harmful objective with detailed assistance. \\
\midrule
 $\phi_5$ ($R_5$) & \texttt{P13F-CONTINUATION}
& $\displaystyle \mathbf{F}\!\left(\begin{gathered}U_h\land\\[-1pt]\mathbf{X}\!\left(U_c\land Q_A\right)\end{gathered}\right)$
& $(0,1)$
& Continued objective with next-turn detailed assistance. \\
 $\phi_6$ ($R_6$) & \texttt{P13F-ESCALATION}
& $\displaystyle \mathbf{F}\!\left(\begin{gathered}U_h\land\\[-1pt]\mathbf{X}\!\left(U_{\uparrow}\land Q_A\right)\end{gathered}\right)$
& $(0,1)$
& Escalated objective with next-turn detailed assistance. \\
 $\phi_7$ ($R_7$) & \texttt{P13F-REFUSAL-BYPASS}
& $\displaystyle \mathbf{F}\!\left(\begin{gathered}A_r\land\\[-1pt]\mathbf{X}\!\left((U_c\lor U_o)\land Q_A\right)\end{gathered}\right)$
& $(0,1)$
& Refusal followed by renewed request and assistance. \\
 $\phi_8$ ($R_8$) & \texttt{P13F-PARTIAL-ESCALATION}
& $\displaystyle \mathbf{F}\!\left(\begin{gathered}A_{\ell}\land\\[-1pt]\mathbf{X}\!\left((U_c\lor U_{\uparrow})\land Q_A\right)\end{gathered}\right)$
& $(0,1)$
& Partial assistance followed by continuation/escalation. \\
 $\phi_9$ ($R_9$) & \texttt{P13F-OBJECTIVE-REFERENCE}
& $\displaystyle \mathbf{F}\!\left(\begin{gathered}D_o\land\\[-1pt]\mathbf{X}\!\left((U_c\lor U_o)\land Q_A\right)\end{gathered}\right)$
& $(0,1)$
& Established objective resumed with assistance. \\
 $\phi_{10}$ ($R_{10}$) & \texttt{P13F-REPEATED-ASSISTANCE}
& $\displaystyle \mathbf{F}\!\left(\begin{gathered}(C_h\land Q_A)\land\\[-1pt]\mathbf{X}\!\left(C_h\land Q_A\right)\end{gathered}\right)$
& $(0,1)$
& Detailed assistance in consecutive harm-context turns. \\
\bottomrule
\end{tabularx}
\end{table*}

\subsection{Witness Construction and Replay}
\label{app:witness-replay}

For constraint \(C_i\), let \(\Delta_i=\max_j\delta_{ij}\). Its matching
anchors on \(w=A_0\cdots A_T\) are
\[
K_i(w)=\left\{k\in\{0,\ldots,T-\Delta_i\}:\
\bigwedge_j e_{ij}(A_{k+\delta_{ij}})=1\right\}.
\]
If nonempty, choose \(k_i^\star=\min K_i(w)\). For event \(e_{ij}\), let
\[
M_{ij}
=
P_{ij}\cup\bigl(Q_{ij}\cap A_{k_i^\star+\delta_{ij}}\bigr)
\]
be its realizing APs. The witness is
\[
\begin{aligned}
\omega_i=\bigl(&i,k_i^\star,
\{(\rho_{ij},k_i^\star+\delta_{ij},M_{ij},
\{\operatorname{Prov}_{k_i^\star+\delta_{ij}}(a)
\}_{a\in M_{ij}})\}_j,\\
&\operatorname{fp}(C_i),
\operatorname{dig}(w)\bigr).
\end{aligned}
\]
The constraint fingerprint hashes the typed events, offsets, and schema; the
trace digest hashes schema identifiers and sorted APs.

Verification recomputes the earliest match, checks roles, turns, APs, both
digests, and every provenance set; any changed field fails replay.

\subsection{Formal Monitoring Guarantees}
\label{app:monitoring-proofs}

The guarantees below hold at the canonical-trace boundary under: (A1)
deterministic activation for a fixed prefix; (A2) causal activation; (A3)
removal of $\mathcal{R}_{\mathrm{end}}$ before projection; and (A4) the
fixed-window existential constraint forms of
Sec.~\ref{subsec:specguard-monitoring}.

Fix $C$ with automaton
$\mathcal{M}_C=(Q,2^{\mathcal A},\delta,q_0,F)$ and trace
$w=A_0\cdots A_T$. Let
\begin{ltlformula}
\begin{aligned}
&\operatorname{Run}_C(w)=(s_0,\ldots,s_{T+1}),\quad
s_0=q_0,\quad s_{j+1}=\delta(s_j,A_j),\\
&\qquad
\pi_C(w)=(A_0\cap\mathcal A_C)\cdots(A_T\cap\mathcal A_C),
\end{aligned}
\end{ltlformula}
where $\mathcal A_C$ is the AP set referenced by $C$.

\smallskip
\noindent\textbf{Proof of Theorem~\ref{thm:finite-monitor-correctness}.}
For a same-turn rule $\mathbf F(B)$, induction on the input position gives
$s_{j+1}=q_V$ iff some $k\leq j$ satisfies $A_k\models B$. For an
adjacent-turn rule $\mathbf F(H\land\mathbf XG)$, the automaton stores whether
the latest transaction satisfies $H$; it enters the absorbing state exactly
when the next transaction satisfies $G$. Hence both released templates satisfy
\[
w\models_f\phi_C
\quad\Longleftrightarrow\quad
w\in L(\mathcal M_C).
\]
\(\square\)

\smallskip
\noindent\textbf{Proposition 1 (Prefix consistency).}
\[
D_{\leq t}=D'_{\leq t}
\quad\Longrightarrow\quad
w(D)_{\leq t}=w(D')_{\leq t}.
\]
\emph{Proof.} Equal prefixes give equal causal views; (A1)--(A2) therefore
give equal evidence, projection, and trace positions. \(\square\)

\smallskip
\noindent\textbf{Proposition 2 (Runtime--audit equivalence).}
If the runtime candidate reply is the reply in the audited prefix, then
\begin{ltlformula}
\begin{aligned}
\mathsf{Alarm}_{\rm run}(D_{\leq t})
=\mathsf{Alarm}_{\rm audit}(D_{\leq t})
=\mathbf 1\!\left[\exists C\in\Phi:
w(D_{\leq t})\models_f\phi_C\right].
\end{aligned}
\end{ltlformula}
\emph{Proof.} Both modes construct the same causal trace by Proposition 1;
automaton correctness gives the same acceptance for every constraint.
\(\square\)

\smallskip
\noindent\textbf{Proposition 3 (Constraint-projection invariance).}
For equal-length traces,
\begin{ltlformula}
\begin{aligned}
&\pi_C(w)=\pi_C(w')
\Longrightarrow
\operatorname{Run}_C(w)=\operatorname{Run}_C(w')\\
&\Longrightarrow
\bigl(w\in L(\mathcal M_C)\Leftrightarrow
w'\in L(\mathcal M_C)\bigr).
\end{aligned}
\end{ltlformula}
\emph{Proof.} Every transition predicate references only $\mathcal A_C$;
induction on positions gives identical state sequences. \(\square\)

\smallskip
\noindent\textbf{Proposition 4 (Endpoint non-interference).}
For $X,Y\subseteq\mathcal R_{\mathrm{end}}$,
\[
\pi\bigl((R_t\setminus\mathcal R_{\mathrm{end}})\cup X\bigr)
=
\pi\bigl((R_t\setminus\mathcal R_{\mathrm{end}})\cup Y\bigr).
\]
Thus excluded endpoint predicates cannot change the trace, automaton run, or
verdict. \emph{Proof.} (A3) discards $X$ and $Y$ before support mapping.
\(\square\)

\smallskip
\noindent\textbf{Proposition 5 (Verdict--witness correspondence).}
For every finite trace $w$,
\begin{ltlformula}
\begin{aligned}
&\mathsf{Alarm}(w)=1
\Longleftrightarrow
\exists C_i\in\Phi,\exists\omega_i:\\
&\ w\models_f\phi_i\land
\operatorname{Verify}(\omega_i,C_i,w)=1.
\end{aligned}
\end{ltlformula}
\emph{Proof.} Verification replays the matched events, offsets, AP support,
and digests, giving soundness. If a rule accepts, the finite anchor set
$K_i(w)$ is nonempty; its earliest element generates a witness that replay
accepts, giving completeness. \(\square\)

\smallskip
\noindent\textbf{Proposition 6 (Extension monotonicity).}
\begin{ltlformula}
\begin{aligned}
&w\in L(\mathcal M_C)\Longrightarrow w\cdot v\in L(\mathcal M_C),
\qquad\\
&\mathsf{Alarm}(w)=1\Longrightarrow\mathsf{Alarm}(w\cdot v)=1.
\end{aligned}
\end{ltlformula}
\emph{Proof.} The accepting state is absorbing, and the fixed-window
existential witness is unchanged by appending a suffix. \(\square\)

\smallskip
\noindent\textbf{Proposition 7 (Monitoring complexity).}
With fixed-width transition predicates, excluding AP activation and text
storage, a runtime update costs $O(|\Phi|)$ time and state; replaying $T+1$
turns costs $O(T|\Phi|)$. \emph{Proof.} Each rule uses a constant-size
automaton and one constant-width transition per turn. \(\square\)

\smallskip
\noindent\textbf{Boundary.}
These results establish deterministic acceptance, mode equivalence,
non-interference, replayability, monotonicity, and complexity after the
canonical trace is instantiated. They do not claim perfect natural-language
fact production; missing or spurious APs are evaluated separately.

\end{document}